\newcommand{\vspacebefore}{\raisebox{0ex}[2.5ex][0ex]{\null}}
\newcommand{\p}{\partial}
\newcommand{\const}{\mathop{\rm const}\nolimits}
\newcommand{\Equiv}{\mathop{ \sim}}
\newcommand{\sign}{\mathop{\rm sign}\nolimits}
\newcommand{\thetbn}{\arabic{nomer}}
\newcommand{\prho}{p}
\newcounter{tbn}
\newcounter{mcasenum}
\newtheorem{theorem}{Theorem}
\newtheorem{lemma}{Lemma}
\newtheorem{corollary}{Corollary}
\theoremstyle{definition} \newtheorem{definition}{Definition}
\newtheorem{example}{Example}
\newtheorem{note}{Note}
\begin{document}

\par\noindent {\LARGE\bf
Group Analysis of Variable Coefficient\\ Diffusion--Convection Equations.\\ II. Contractions 
 and Exact Solutions
\par}
{\vspace{5mm}\par\noindent {\bf N.M. Ivanova~$^\dag$, R.O. Popovych~$^\ddag$ and C. Sophocleous~$^\S$
} \par\vspace{2mm}\par}
{\vspace{2mm}\par\noindent {\it
$^\dag{}^\ddag$~Institute of Mathematics of NAS of Ukraine, 
3 Tereshchenkivska Str., 01601 Kyiv, Ukraine\\
}}
{\noindent \vspace{2mm}{\it
$\phantom{^\dag{}^\ddag}$~e-mail: ivanova@imath.kiev.ua, rop@imath.kiev.ua
}\par}

{\par\noindent\vspace{2mm} {\it
$^\ddag$~Fakult\"at f\"ur Mathematik, Universit\"at Wien, Nordbergstra{\ss}e 15, A-1090 Wien, Austria
} \par}

{\vspace{2mm}\par\noindent {\it
$^\S$~Department of Mathematics and Statistics, University of Cyprus,
CY 1678 Nicosia, Cyprus\\
}}
{\noindent {\it
$\phantom{^\S}$~e-mail: christod@ucy.ac.cy
} \par}

{\vspace{7mm}\par\noindent\hspace*{8mm}\parbox{140mm}{\small
This is the second part of the series of papers on symmetry properties
of a class of variable coefficient (1+1)-dimensional nonlinear diffusion--convection equations of general form
$f(x)u_t=(g(x)A(u)u_x)_x+h(x)B(u)u_x$.
At first, we review the results of~\cite{Ivanova&Popovych&Sophocleous2006Part1} on equivalence transformations
and group classification of the class under consideration.
Investigation of non-trivial limits of parameterized subclasses of equations
from the given class, which generate contractions of the corresponding maximal Lie invariance algebras,
leads to the natural notion of contractions of systems of differential equations.
After a brief discussion on contractions of symmetries, equations and solutions in general case,
such types of contractions are studied for diffusion--convection equations.
A detailed symmetry analysis of an interesting equation from the class under consideration
is performed.
Exact solutions of some subclasses of the considered class are also given.
}\par\vspace{7mm}}

\section{Introduction}

The present manuscript continues the series of works on symmetry properties
of the nonlinear variable-coefficient diffusion--convection equations of form
\begin{equation} \label{eqDKfgh}
f(x)u_t=(g(x)A(u)u_x)_x+h(x)B(u)u_x,
\end{equation}
where $f=f(x),$ $g=g(x),$ $h=h(x),$ $A=A(u)$ and $B=B(u)$ are arbitrary smooth functions of their variables,
$f(x)g(x)A(u)\!\neq\! 0.$

Based on the results on equivalence transformations and group classification adduced in
the first part~\cite{Ivanova&Popovych&Sophocleous2006Part1} of the presented work,
we continue studying Lie group properties of the equations under consideration.
Namely, investigation of non-trivial limits of parameterized subclasses of equations
from class~(\ref{eqDKfgh}), which generate contractions of the corresponding maximal Lie invariance algebras,
leads us to the natural notion of {\em contraction of (systems of) differential equations}.

It is well-known that exponential cases of parameter-functions,
which admit extensions of maximal Lie invariance algebras,
are  often limits of the power ones with arbitrary exponents~\cite{Bluman&Reid&Kumei1988,Bluman&Kumei1989,Popovych&Ivanova2003PETs}.
By analogy with terminology accepted for the Lie algebras we will call such limits as {\em contractions}.
In such situations some authors exclude exponential cases from formulation of final results of group classification
that is correct only under explicit admission of combined usage of both point equivalence transformations
and contractions in the framework of group classification.
Another famous contraction is one of the 1-dimensional Ricci flow $u_t=\Delta\ln u$ from the porous medium equation
$u_t=m^{-1}\Delta(u^m-1)$ under the limit $m\to0$~\cite{Wu1993}.


All the above-mentioned contractions were found by ad hoc procedures.
To the best of our knowledge this is the first work giving the precise definition and mathematical background
for contractions of equations, algebras of symmetries and solutions. The value of the presented theory is illustrated
in Section~\ref{SectionDKfghContractions} using as example the variable-coefficient diffusion--convection equations~\eqref{eqDKfgh} investigated.

Such contractions allow us to establish additional connections between different cases of extension of maximal Lie invariance algebra
in class~(\ref{eqDKfgh}) and ones between solutions of equations in these cases.
Thus, e.g., applying contractions to the known solutions of initial equations one can easily obtain
exact solutions of the target equations.

In fact, group analysis (in particular, invariant and partially invariant solutions and solutions obtained from separation of variables)
is the only systematic method we know for deducing exact solutions of nonlinear partial differential equations,
and the role of exact solutions in our understanding the mechanics and mathematics of mathematical models cannot be over-estimated.
Classes of invariant solutions usually include self-similar (automodel) solutions arising from scaling invariance,
travelling waves which are invariant with respect to translation symmetries and other classes of solutions having
physical and analytical importance.
For most of the nonlinear systems invariant solutions are the only available exact solutions.

Below we construct exact solutions of equations from class~\eqref{eqDKfgh} using different approaches.
We start by illustration of the simplest approach for deriving exact solutions of the complicated equations
from the ones of a simpler model by means of application of equivalence transformations.
At first sight such investigation may seem to be not very interesting and trivial. However,
a number of authors investigate symmetry properties of different classes of equivalent equations.
In such way they perform a huge number of unnecessary cumbersome calculations.
Since equivalence transformations may be quite complicated, sometimes it seems to be impossible to obtain directly complete and correct results
for some cases which are equivalent to simple ones, although these results can be easily reconstructed with application of equivalence transformations.

At the same time, it is a general mathematical rule that equivalent in some sense objects possess
equivalent properties. In particular, if two systems of equations are equivalent with respect
to point transformations then there exists a one-to-one correspondence between their maximal
Lie invariance algebras, spaces of conservation laws, potential symmetries, exact solutions etc.
All the above mentioned features for more complicated models can be constructed from ones
of the simpler equivalent models by means of application of known equivalence transformation
easier than by direct calculations.

We performed the classical Lie reduction procedure for an equation that cannot be reduced to a constant coefficient equation from class~\eqref{eqDKfgh}.
Then we investigate in more detail an equation having, from the one side, remarkable Lie symmetry property
(namely, $sl(2,\mathbb{R})$-invariance) and, from the other side, admitting application of powerful non-Lie methods of construction of exact solutions.

The present paper is organized as follows.
First of all, in Section~\ref{SectionOnEquivTransGrClasOfEqDKfgh}, for convenience of the reader
we review (in the most suitable for our purpose form) the results of~\cite{Ivanova&Popovych&Sophocleous2006Part1} on
equivalence transformations and group classification of equations~\eqref{eqDKfgh} using simultaneously two different gauges $g=1$ and $g=h$.
Then, in Section~\ref{SectionDKfghContractions} we introduce the notion of contractions of equations
and contractions of symmetry algebras, give a necessary theoretical background
and found contractions of previously classified diffusion--convection equations and those of corresponding symmetry algebras.
Examples of application of additional equivalence transformations are shown in Section~\ref{SectionOnExactSolFromEquiv},
where we reconstruct exact solutions of variable coefficient equations with four-dimensional symmetry algebras
(reducible to constant coefficient ones) from the known solutions of constant coefficient equations.
A case of equation reducible to a constant coefficient form that do not belong to the class under consideration is
presented in Section~\ref{SectionOnExactSolOfGenBurEq}.
In Section~\ref{SectionGrAnOfSl2Eq}
an enhanced group analysis of essentially variable coefficient $sl(2,\mathbb{R})$-invariant diffusion--convection equation is presented
and a set of exact Lie and non-Lie solution is constructed via reduction with respect to the classical Lie and nonclassical symmetries.

Results of Section~\ref{SectionDKfghContractions} will be generalized in the third part~\cite{Ivanova&Popovych&Sophocleous2006Part3}
of this series of papers where we introduce the notion of {\em contractions of conservation laws}, study
the space of local conservation laws and investigate contractions of the conservation laws of equations from class~\eqref{eqDKfgh}.
Ibid we introduce also some generalizations of equivalences of conservation laws that allow us to generalize procedure of construction
of potential systems for systems of differential equations.
Using this new technique we construct all possible inequivalent potential systems for equations from class~\eqref{eqDKfgh}.

\section{Equivalence transformations and group classification\\ of diffusion--convection equations}\label{SectionOnEquivTransGrClasOfEqDKfgh}

In this section we briefly summarize the results of the first part~\cite{Ivanova&Popovych&Sophocleous2006Part1} of this series of papers
concerning equivalence transformations and group classification of class~\eqref{eqDKfgh}, presenting these results
in the most suitable form for our purpose. Due to lack of power of the usual equivalence group~$G^{\sim}$
it seems to be impossible to obtain result of group classification of class~\eqref{eqDKfgh} up to~$G^{\sim}$
in explicit form~\cite{Ivanova&Sophocleous2006}.
So, we adduce the {\it extended equivalence group}~\cite{Ivanova&Popovych&Sophocleous2006Part1,Ivanova&Popovych&Sophocleous2004}
containing transformations depending nonlocally on arbitrary elements.
Using the direct method we construct the complete in this sense group~$\hat G^{\sim}$
formed by the transformations
\begin{gather*}
\tilde t=\delta_1 t+\delta_2,\quad
\tilde x=X(x), \quad
\tilde u=\delta_3 u+\delta_4, \\
\tilde f=\dfrac{\varepsilon_1\delta_1\varphi}{X_x}f, \quad
\tilde g=\varepsilon_1\varepsilon_2^{-1}X_x\varphi\,g, \quad
\tilde h=\varepsilon_1\varepsilon_3^{-1}\varphi\,h, \quad
\tilde A=\varepsilon_2 A, \quad
\tilde B=\varepsilon_3 (B+\varepsilon_4 A),
\end{gather*}
where $\delta_j$ $(j=\overline{1,4})$ and $\varepsilon_i$ $(i=\overline{1,4})$ are arbitrary constants,
$\delta_1\delta_3\varepsilon_1\varepsilon_2\varepsilon_3\not=0$,
$X$ is an arbitrary smooth function of~$x$, $X_x\not=0$,
$\varphi=e^{-\varepsilon_4\int \frac{h(x)}{g(x)}dx}$.

Group~$\hat G^{\sim}$ contains a normal subgroup~$\hat G^{\sim g}$ of
{\em gauge equivalence transformations}~\cite{Ivanova&Popovych&Sophocleous2006Part1,Ivanova&Popovych&Sophocleous2004,LisleDissertation}
\begin{gather}
\tilde f=\varepsilon_1\varphi\, f, \quad
\tilde g=\varepsilon_1\varepsilon_2^{-1}\varphi\, g, \quad
\tilde h=\varepsilon_1\varepsilon_3^{-1}\varphi\, h,\quad
\tilde A=\varepsilon_2 A, \quad
\tilde B=\varepsilon_3 (B+\varepsilon_4 A),
\label{GaugeEquivTransformationsDKfgh}
\end{gather}
where $\varphi=e^{-\varepsilon_4\int \frac{h(x)}{g(x)}dx}$,
$\varepsilon_i$ $(i=\overline{1,4})$ are arbitrary constants, $\varepsilon_1\varepsilon_2\varepsilon_3\not=0$
(the variables $t$, $x$ and $u$ do not transform!).
The transformations~\eqref{GaugeEquivTransformationsDKfgh} act only on arbitrary elements
and do not really change equations.
Application of gauge equivalence transformations is equivalent to rewriting equations
in another form. In spite of really equivalence transformations, their role in group classification
comes not to choice of representatives in equivalence classes but to choice of form of these representatives.

Choice of a gauge for the arbitrary elements is very important for solving and for the final presentation of results.
It is more convenient to constrain the parameter-function~$g$ instead of~$f$ in class~\eqref{eqDKfgh}.
The next problem is choice between gauges of~$g$.
Group classification of~\eqref{eqDKfgh} is performed in~\cite{Ivanova&Popovych&Sophocleous2006Part1} in two different gauges: $g=1$ and $g=h$.
For further investigation we should reduce these results to the floating gauge in order to obtain the simplest
representatives in each equivalence class.
So, below, in contrast to~\cite{Ivanova&Popovych&Sophocleous2006Part1} we adduce the group classification
with floating gauge. The cases $B\not\in\langle 1,A\rangle$ and $A=1$ are investigated in the gauge $g=h$.
In the other cases, to obtain results in a simpler explicit form one should use the gauge $g=1$.

\begin{theorem}\label{TheoremOnClassificationOfeqDKfgh}
The Lie algebra of the kernel of principal groups of~\eqref{eqDKfgh} is $A^{\cap}=\langle \p_t \rangle$.
A complete set of $\hat G^{\Equiv}$-inequivalent equations~\eqref{eqDKfgh} which have
the wider Lie invariance algebras than $A^{\cap}$ is exhausted by cases given in tables~1--3.
\end{theorem}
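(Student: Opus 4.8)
The plan is to carry out the classical infinitesimal group classification of class~\eqref{eqDKfgh}, treating $f$, $g$, $h$, $A$, $B$ as arbitrary elements and using the extended equivalence group~$\hat G^{\sim}$ to organise the result. First I would seek the infinitesimal generator $Q=\tau\p_t+\xi\p_x+\eta\p_u$ of a one-parameter Lie symmetry group. Writing the equation in evolutionary form $u_t=f^{-1}\big((gAu_x)_x+hBu_x\big)$ and applying the second prolongation $Q^{(2)}$, I would impose the infinitesimal invariance criterion on the solution manifold, i.e.\ substitute the expression for $u_t$ and collect the coefficients of the functionally independent jet coordinates $u_{xx}$, $u_x^3$, $u_x^2$, $u_x$ and the free term. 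Since $t$ enters neither the equation nor the arbitrary elements, the standard splitting immediately yields $\tau=\tau(t)$, $\xi=\xi(x)$ and a linear structure for the prolongation coefficients, so that the remaining determining equations relate $\tau$, $\xi$, $\eta$ to $A$, $B$, $f$, $g$, $h$ alone.

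Next I would isolate the \emph{kernel}: the subsystem of determining equations that must hold for \emph{arbitrary} values of the five arbitrary elements. Because $A=A(u)$ and $B=B(u)$ depend only on~$u$ while $f,g,h$ depend only on~$x$, the classifying equations separate into $u$-dependent and $x$-dependent parts; for generic arbitrary elements the resulting separation constants are forced to vanish, leaving $\tau_t=0$, $\xi=0$, $\eta=0$. This establishes that the kernel algebra is exactly $A^{\cap}=\langle\p_t\rangle$, the symmetry $\p_t$ itself being immediate from the autonomy in~$t$.

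The bulk of the work, and the main obstacle, is the classification proper: determining precisely which special forms of the arbitrary elements force one or more of the separation constants to be nonzero, hence admit extensions of $A^{\cap}$. I would read the classifying equations first as conditions on $A$ and $B$, producing the usual dichotomy of constant, power and exponential nonlinearities together with the distinguished relations $B\notin\langle 1,A\rangle$, $A=1$, and the like; for each such case I would then solve the residual ordinary differential equations in~$x$ to pin down the admissible $f$, $g$, $h$. The case analysis is combinatorially heavy, and the delicate part is avoiding redundancy.

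Finally, using the transformations of~$\hat G^{\sim}$ --- including the gauge subgroup~\eqref{GaugeEquivTransformationsDKfgh} and, where appropriate, the limiting contractions developed below --- I would reduce the list to $\hat G^{\sim}$-inequivalent representatives and, following the floating-gauge convention stated above (the gauge $g=h$ for the cases $B\notin\langle 1,A\rangle$ and $A=1$, and $g=1$ otherwise), select the simplest representative in each class. Collecting the surviving inequivalent cases then yields exactly the list recorded in tables~1--3.
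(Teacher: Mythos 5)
Your overall skeleton (infinitesimal invariance criterion, splitting with respect to jet coordinates, isolation of the kernel, case analysis on $A$, $B$ followed by the residual ODEs in $x$, reduction modulo the extended equivalence group) is the right one, and it is essentially the method of Part I of the series, which is where this theorem is actually proved --- the present paper offers no proof at all, only a restatement of Part I's result in a floating gauge. However, your execution contains a step that would fail outright: the claim that autonomy in $t$ makes the splitting yield $\tau=\tau(t)$, $\xi=\xi(x)$. Autonomy only guarantees that $\p_t$ is a symmetry and that the determining equations have $t$-independent coefficients; it does not force the symmetry coefficients themselves to be $t$-independent. The correct output of the splitting for this class is $\tau=\tau(t)$, $\xi=\xi(t,x)$, and the $t$-dependence of $\xi$ is essential: a large fraction of the maximal algebras in tables~1--3 contain operators whose $x$-coefficient depends on $t$, e.g.\ $e^{-t}(\p_t-\p_x)$ in case~1.2b, $e^{-t}x\p_x$ in case~1.3, and $e^{-2pt}\p_x$ in case~3.5. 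Under your ansatz every one of these extensions is invisible, so the classification you would produce cannot be complete and would contradict the very tables the theorem asserts.

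A second genuine error is your proposal to use ``the limiting contractions developed below'' alongside $\hat G^{\sim}$ when reducing to inequivalent representatives. Contractions are limits of parameterized families of equations, not point transformations; they are not elements of $\hat G^{\sim}$ and cannot be invoked to establish $\hat G^{\Equiv}$-(in)equivalence. The paper is explicit about this distinction: merging exponential cases into power cases ``is correct only under explicit admission of combined usage of both point equivalence transformations and contractions,'' and the theorem as stated deliberately keeps, for instance, the exponential nonlinearities of table~2 and the power nonlinearities of table~3 as genuinely inequivalent entries, even though Section~3 exhibits contractions relating them. Folding contractions into the reduction step would merge rows that the theorem lists separately, i.e.\ it would prove a different (and, for this statement, wrong) classification. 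Finally, even with these two points repaired, what you have written is a plan rather than a proof: the entire content of the theorem is the explicit completeness of tables~1--3, which can only be certified by actually carrying out the ``combinatorially heavy'' integration of the classifying equations modulo $\hat G^{\sim}$ --- the work done in Part I and merely cited here.
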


In tables~1--3 we list all possible $\hat G^{\Equiv}$-inequivalent
sets of functions $f(x),$ $h(x)$ $A(u),$ $B(u)$ and corresponding invariance algebras
under the gauge~$g=1$ or $g=h$.
Numbers with the same Arabic numerals and different Roman letters
correspond to cases that are equivalent with respect to additional equivalence transformations.
Explicit formulas for these transformations can be found in~\cite{Ivanova&Popovych&Sophocleous2006Part1}.
The cases which are contained in tables with different Arabic numbers
or are numbered with different Arabic numerals are reciprocally inequivalent with respect to point transformations,
except the cases with asterisked numeration, presented here to preserve the same numeration as in~\cite{Ivanova&Popovych&Sophocleous2006Part1}.

The operators from tables~1--3 form bases of the maximal invariance algebras
iff the corresponding sets of the functions $f$, $h$, $A$ and $B$ are
$\hat G^{\Equiv}$-inequivalent to ones with more extensive invariance algebras.
For example, in case~3.1 the adduced operators have the above property iff $f\not=f^3$.

Case 1.\ref{gcAaBafexh1}a$'$ is equivalent to case 1.\ref{gcAaBafexh1}a with respect to transformation
\[
\tilde t=t, \quad \tilde x=\ln|x|, \quad \tilde u=u, \quad \tilde A=A, \quad \tilde B=B-A, \quad \tilde p=p+2
\]
from $\hat G^\sim$. We adduce case 1.\ref{gcAaBafexh1}a$'$ here for the convenience of presentation of results only.


\setcounter{tbn}{0}

\begin{center}\footnotesize\renewcommand{\arraystretch}{1.1}
Table~1. Case of $\forall A(u)$ (gauge $g=1$)\\[1ex]
\begin{tabular}{|l|c|c|c|l|}
\hline
N & $B(u)$ & $f(x)$ & $h(x)$ & \hfil Basis of A$^{\rm max}$ \\
\hline
\refstepcounter{tbn}\label{gcAaBafaha}\thetbn & $\forall$ & $\forall$ & $\forall$ & $\p_t$ \\
\refstepcounter{tbn}\label{gcAaBafexh1}\thetbn a&$\forall$ & $e^{p x}$ & 1 &$\p_t,\, p t\p_t+\p_x$ \\
\thetbn a$'$ & $\forall$ & $|x|^p$ & $x^{-1}$ & $\p_t,\, (p+2)t\p_t+x\p_x$ \\
\thetbn b & 1 & $e^x$ & $e^x+\beta$ & $\p_t,\, e^{-t}(\p_t-\p_x)$ \\
\thetbn c & 1 & $|x|^p$ & $x|x|^p+\beta x^{-1}$ & $\p_t,\, e^{-(p+2)t}(\p_t-x\p_x)$ \\
\refstepcounter{tbn}\label{gcAaB1fx-2hlnxx}\thetbn & 1 & $x^{-2}$ & $x^{-1}\ln|x|$ & $\p_t,\, e^{-t}x\p_x$ \\
\refstepcounter{tbn}\label{gcAaB0f1ha}\thetbn a& 0 & 1 & 1 & $\p_t,\, \p_x,\, 2t\p_t+x\p_x$ \\
\thetbn b& 1 & 1 & $x$ & $\p_t,\, e^{-t}\p_x,\, e^{-2t}(\p_t-x\p_x)$ \\
\hline
\end{tabular}
\end{center}
{\footnotesize

Here $p\in\{0,1\}\!\!\mod G^{\sim}_1$ in case~\ref{gcAaBafexh1};
$p\ne-2$ in case \ref{gcAaBafexh1}c;
$\beta\in\{0,\pm 1\}$ in case \ref{gcAaBafexh1}b.

\par}

\vspace{3ex}

\setcounter{tbn}{0}

\begin{center}\footnotesize\renewcommand{\arraystretch}{1.1}
Table~2. Case of $A(u)=e^{\mu u}$ \\[1ex]
\begin{tabular}{|l|c|c|c|c|l|}
\hline
N & $B(u)$ & $f(x)$ & $g(x)$ & $h(x)$ & \hfil Basis of A$^{\rm max}$ \\
\hline
\refstepcounter{tbn}\label{AeuB0faha}\thetbn & 0& $\forall$ & 1 & 1 & $\p_t,\, t\p_t-\p_u$ \\
\refstepcounter{tbn}\label{AeuBenufemxhex}\thetbn & $e^{\nu u}$ & $|x|^p$ & 1 & $|x|^q$ &
$\p_t,\, (p\mu-p\nu-2\nu-q\mu+\mu)t\p_t+(\mu-\nu)x\p_x+(q+1)\p_u$ \\
\thetbn${}^*$ & $e^{\nu u}$ & $e^{p x}$ & 1 & $\varepsilon e^{x}$ &
$\p_t,\, (p\mu-p\nu-\mu)t\p_t+(\mu-\nu)\p_x+\p_u$ \\
\refstepcounter{tbn}\label{AeuBueufex2+xghhex2}\thetbn & $ue^u$ & $e^{p x^2+q x}$ & $e^{p x^2}$ & $e^{p x^2}$ &
$\p_t,\, (2p+q)t\p_t+\p_x-2p\p_u$ \\
\refstepcounter{tbn}\label{AeuBenukf1h1}\thetbn & $e^{\nu u}+\varkappa$ & $1$ & $1$ & 1 &
$\p_t,\, \p_x,\, (\mu-2\nu)t\p_t+((\mu-\nu)x+\nu\varkappa t)\p_x+\p_u$ \\
\refstepcounter{tbn}\label{AeuBuf1h1}\thetbn & $u$ & 1 & 1 & 1 &$\p_t,\, \p_x,\, t\p_t+(x-t)\p_x+\p_u$ \\
\refstepcounter{tbn}\label{AeuB0ff1h}\thetbn a &
0 & $f^1(x)$ & 1 & 1  & $\p_t,\, t\p_t-\p_u,\, \alpha t\p_t+(\beta x^2+\gamma_1x+\gamma_0)\p_x+\beta x\p_u$ \\
\thetbn b &
1 & $|x|^p$ & 1  & $\varepsilon x|x|^{p}$ & $\p_t,\, x\p_x+(p+2)\p_u,\, e^{-\varepsilon(p+2)t}(\p_t-\varepsilon x\p_x)$ \\
\thetbn b${}^*$ &
1 & $e^{x}$ & 1 & $\varepsilon e^{x}$ & $\p_t,\, \p_x+\p_u,\, e^{-\varepsilon t}(\p_t-\varepsilon\p_x)$ \\
\thetbn c & 1 & $x^{-2}$ & 1 & $\varepsilon x^{-1}$ & $\p_t,\, x\p_x,\, t\p_t-\varepsilon tx\p_x-\p_u$ \\
\refstepcounter{tbn}\label{AeuB0f1h}\thetbn a & 0 & 1& 1 & 1 & $\p_t,\, t\p_t-\p_u,\, 2t\p_t+x\p_x,\, \p_x$ \\
\thetbn b & 1 & 1 & 1 & 1 & $\p_t, \,\p_x,\, t\p_t-t\p_x-\p_u,\, 2t\p_t+(x-t)\p_x$ \\
\thetbn c & 1 & $1$ & 1 & $\varepsilon x$ & $\p_t,\, x\p_x+2\p_u,\, e^{-\varepsilon t}\p_x,\,
e^{-2\varepsilon t}(\p_t-\varepsilon x\p_x)$ \\
\thetbn d & 0 & $x^{-3}$ & 1 & 1 & $ \p_t,\, t\p_t-\p_u,\, x\p_x-\p_u,\, x^2\p_x+x\p_u$ \\
\thetbn e & 1 & $x^{-3}$ & 1 & $x^{-2}$ & $\p_t,\, x\p_x-\p_u,\, e^t(\p_t-x\p_x),\, e^t(x^2\p_x+x\p_u)$ \\
\hline
\end{tabular}
\end{center}
{\footnotesize
Here $(\mu,\,\nu)\in\{(0,\,1),\, (1,\,\nu)\}$, $\nu\ne\mu$ in cases~\ref{AeuBenufemxhex}, \ref{AeuBenufemxhex}${}^*$
and~\ref{AeuBenukf1h1};
$\mu=1$ and $\nu\ne1$ in the other cases;
$q\ne-1$ in case~\ref{AeuBenufemxhex}${}^*$ (otherwise it is subcase of the case~1.\ref{gcAaBafexh1}a$'$);
$\varepsilon=\pm1$ in cases~\ref{AeuBenufemxhex}, \ref{AeuB0ff1h}b--\ref{AeuB0ff1h}c and \ref{AeuB0f1h}e;
$p\not\in\{-3,-2,0\}$ in case~\ref{AeuB0ff1h}b;
$\alpha,\beta, \gamma_1, \gamma_0=\const$ and
\[
f^1(x)=\exp\left\{\int\frac{-3\beta x-2\gamma_1+\alpha}
 {\beta x^2+\gamma_1x+\gamma_0}\,dx\right \}.
\]
Case~\ref{AeuBenufemxhex}($q=-1$) is a subcase of case~1.\ref{AeuBenufemxhex}a${}'$.
\par}
\setcounter{tbn}{0}

\begin{center}\footnotesize\renewcommand{\arraystretch}{1.2}
Table~3. Case of $A(u)=|u|^\mu$ \\[1ex]
\begin{tabular}{|l|l|c|c|c|c|l|}
\hline
\hfil N & $\hfil \mu$ & $B(u)$ & $f(x)$ & $g(x)$ & $h(x)$ & \hfil Basis of A$^{\rm max}$ \\
\hline
\refstepcounter{tbn}\label{AumB0fh}\thetbn & $\forall$ & $0$ & $\forall$ & 1 & 1 & $\p_t,\, \mu t\p_t-u\p_u$ \\
\refstepcounter{tbn}\label{AumBunfxlhxg}\thetbn&
$\forall$ & $|u|^\nu$ & $|x|^p$ & 1 & $|x|^q$ & $\p_t,\, (\mu+p\mu-q\mu-p\nu-2\nu)t\p_t$ \\
&&&&&&$+(\mu-\nu)x\p_x+(q+1)u\p_u$ \\
\thetbn${}^*$ &
$\forall$ & $|u|^\nu$ & $e^{px}$ & 1 & $\varepsilon e^{x}$ & $\p_t,\, (p\mu-p\nu-\mu)t\p_t+(\mu-\nu)\p_x+u\p_u$ \\
\refstepcounter{tbn}\label{aumbumlnufeax2+xghheax}\thetbn & $\forall$ & $|u|^\mu\ln|u|$ & $e^{p x^2+q x}$ &
$e^{p x^2}$ & $e^{p x^2}$ & $\p_t,\, (2\mu p+q)t\p_t+\p_x-2p u\p_u$ \\
\refstepcounter{tbn}\label{Au-1B1fx-3ebxhx-2ebx}\thetbn & $\forall$ & 1 & $f^2(x)$ & 1 & $\varepsilon xf^2(x)$ & $\p_t,$ \\
&&&&&& $e^{\varepsilon t}(\p_t-\varepsilon((\mu+1)\beta x^2+x)\p_x-\varepsilon\beta xu\p_u)$ \\
\refstepcounter{tbn}\label{a1bafeax2ghheax2}\thetbn & 0 & $\forall$ & $e^{p x^2}$ &
$e^{p x^2}$ & $e^{p x^2}$ & $\p_t,\, e^{-2p t}\p_x$ \\
\refstepcounter{tbn}\label{a1bafexhghhgex}\thetbn & 0 & $\forall$ & $ e^{x+\gamma e^{x}}$ & $e^{\gamma e^{x}}$ & $e^{\gamma e^{x}}$ &
$\p_t,\, e^{-\gamma t}(\p_t-\gamma \p_x)$ \\
\refstepcounter{tbn}\label{a1bufeax2+xghheax}\thetbn & 0 & $u$ & $e^{p x^2+x}$ &
$e^{p x^2}$ & $e^{p x^2}$ & $\p_t,\, t\p_t+\p_x-2p\p_u$ \\
\refstepcounter{tbn}\label{AumBunkf1h1}\thetbn & $\forall$ & $|u|^\nu+\varkappa$ & $1$ &1 & $1$ & $\p_t,\, \p_x,$ \\
&&&&&& $(\mu-2\nu)t\p_t+((\mu-\nu)x+\nu\varkappa t)\p_x+u\p_u$ \\
\refstepcounter{tbn}\label{AumBlnuf1h1}\thetbn & $\forall$ & $\ln|u|$ & $1$ &1 & $1$ &
$\p_t,\, \p_x,\, \mu t\p_t+(\mu x-t)\p_x+u\p_u$ \\
\refstepcounter{tbn}\label{a1bufeax2ghheax2}\thetbn & 0 & $u$ & $e^{p x^2}$ & $e^{p x^2}$ & $e^{p x^2}$ &
$\p_t,\, e^{-2p t}\p_x,\, \p_x-2p\p_u$ \\
\refstepcounter{tbn}\label{a1blnufeax2ghheax2}\thetbn & 0 & $\ln|u|$ & $e^{p x^2}$ & $e^{p x^2}$ & $e^{p x^2}$ &
$\p_t,\, e^{-2p t}\p_x,\, \p_x-2p u\p_u$ \\
\refstepcounter{tbn}\label{AumB0ff3hh}\thetbn a & $\forall$ & 0 & $f^3(x)$ &1 & 1 &$\p_t,\, \mu t\p_t-u\p_u,$ \\
&&&&&& $\alpha t\p_t+((\mu+1)\beta x^2+\gamma_1x+\gamma_0)\p_x+\beta xu\p_u$ \\
%
%
\thetbn b & $\forall$ & $1$ & $|x|^p$ &1 & $\varepsilon x|x|^{p}$ &
$\p_t,\, \mu x\p_x+(p+2)u\p_u,\, e^{-\varepsilon(p+2)t}(\p_t-\varepsilon x\p_x)$ \\
%
\thetbn b${}^*\!\!\!$ &
$\ne-1$ & 1 & $e^{x}$ &1 & $\varepsilon e^{x}$ & $\p_t,\, \mu\p_x+u\p_u,\, e^{-\varepsilon t}(\p_t-\varepsilon\p_x)$ \\
\thetbn c & $\ne-2$ & 1 & $x^{-2}$ &1 & $\varepsilon x^{-1}$ & $\p_t,\, x\p_x,\, \mu t\p_t-\varepsilon\mu tx\p_x-u\p_u$ \\
\refstepcounter{tbn}\label{A-65B1fx2hx2}\thetbn & $-6/5$ & $1$ & $x^2$ &1 & $x^2$ & $\p_t,\, 2t\p_t+2x\p_x-5u\p_u,$ \\
&&&&&& $t^2\p_t+(2tx+x^2)\p_x-5(t+x)u\p_u$ \\
\refstepcounter{tbn}\label{AumB0f1h}\thetbn a & $\ne -4/3$ & 0 & 1 &1 & 1 &
$\p_t,\, \mu t\p_t-u\p_u,\, \p_x,\, 2t\p_t+x\p_x$\\
\thetbn b & $\ne -4/3$ & 1 &1 & 1 & 1 & $\p_t,\, \mu t\p_t-\mu t\p_x-u\p_u,\, \p_x,\, 2t\p_t+(x-t)\p_x$ \\
\thetbn c & $\ne -4/3$ & 1 & $1$ &1 & $\varepsilon x$ &
$\p_t,\, \mu x\p_x+2u\p_u,\, e^{-\varepsilon t}\p_x,\, e^{-2\varepsilon t}(\p_t-\varepsilon x\p_x)$ \\
\thetbn d & $\ne -4/3, -1$ & 0 & $|x|^{-\frac{3\mu+4}{\mu+1}}$ &1 & 1 &
$\p_t,\, \mu t\p_t-u\p_u,\, (\mu+2)t\p_t-(\mu+1)x\p_x,$ \\ &&&&&& $(\mu+1)x^2\p_x+xu\p_u $ \\
\thetbn e&$\ne -2,$  & 1 & $|x|^{-\frac{3\mu+4}{\mu+1}}$ &1 & $\varepsilon x|x|^{-\frac{3\mu+4}{\mu+1}}$ &
$\p_t,\, \mu(\mu+1)x\p_x-(\mu+2)u\p_u,$ \\
&$-4/3,-1$&&&&& $e^{\varepsilon\frac{\mu+2}{\mu+1}t}(\p_t-\varepsilon x\p_x),\,e^{\varepsilon t}((\mu+1)x^2\p_x+xu\p_u)$\\
\thetbn f & $-1$ & 0 & $e^x$ &1 & 1 & $ \p_t,\, t\p_t+u\p_u,\, \p_x-u\p_u,\,2t\p_t+x\p_x-xu\p_u$ \\
\thetbn g & $-1$ & $1$ & $e^{x}$ &1 & $\varepsilon e^{x}$ &
$\p_t,\, \p_x-u\p_u,\, (x+\varepsilon t-2)\p_x-(x+\varepsilon t)u\p_u,$ \\
&&&&&& $e^{-\varepsilon t}(\p_t-\varepsilon\p_x)$ \\
\thetbn h & $-2$ & $1$ & $x^{-2}$ &1 & $\varepsilon x^{-1}$ & $\p_t,\, x\p_x,\, 2t\p_t-2\varepsilon tx\p_x+u\p_u,$ \\
&&&&&& $e^{\varepsilon t}(x^2\p_x-xu\p_u)$\\
\refstepcounter{tbn}\label{A-43B0f1h}\thetbn a & $-4/3$ & 0 & 1 &1 & 1 &
$\p_t,\, 4t\p_t+3u\p_u,\, \p_x,\, 2t\p_t+x\p_x,$ \\ &&&&&& $x^2\p_x-3xu\p_u$ \\
\thetbn b & $-4/3$ & 1 & 1 &1 & $1$ & $\p_t,\, 4t\p_t+4x\p_x-3u\p_u,\, 2t\p_t+(x-t)\p_x,$ \\
 &&&&&&$\p_x,\, (x+t)^2\p_x-3(x+t)u\p_u$ \\
\thetbn c & $-4/3$ & 1 & 1 &1 & $\varepsilon x$ & $\p_t,\, 2x\p_x-3u\p_u,\, e^{-\varepsilon t}\p_x,$ \\
&&&&&& $e^{-2\varepsilon t}(\p_t-\varepsilon x\p_x),\, e^{\varepsilon t}(x^2\p_x-3xu\p_u)$ \\
\refstepcounter{tbn}\label{A1Buf1h1}\thetbn & 0 & $u$ & 1 &1 & $1$ & $ \p_t,\, \p_x,\, t\p_x-\p_u,\, 2t\p_t+x\p_x-u\p_u,$ \\
&&&&&& $t^2\p_t+tx\p_x-(tu+x)\p_u$ \\
%
\hline
\end{tabular}
\end{center}
{\footnotesize
Here $\nu\ne\mu$; $\varepsilon=\pm1$;
$q\ne-1$ in case~\ref{AumBunfxlhxg}${}^*$ (otherwise it is subcase of the case~1.\ref{gcAaBafexh1}a$'$);
$p\ne-2,-(3\mu+4)/(\mu+1)$ in case~\ref{AumB0ff3hh}c;
$\alpha$, $\beta$, $\gamma_1$, $\gamma_0=\const$,
and
\[
f^2(x)=\exp\left\{\int\frac{-(3\mu+4)\beta x-3}
 {(\mu+1)\beta x^2+x}\,dx\right \},\quad
f^3(x)=\exp\left\{\int\frac{-(3\mu+4)\beta x-2\gamma_1+\alpha}
 {(\mu+1)\beta x^2+\gamma_1x+\gamma_0}\,dx\right \}.
\]
\par}

\section{Contractions of equations and of Lie invariance algebras}\label{SectionDKfghContractions}

In this section we investigate non-trivial limits of parameterized subclasses of equations
from class~(\ref{eqDKfgh}),
which generate contractions of the corresponding maximal Lie invariance algebras
as realizations of abstract Lie algebras in the complete space of (independent and dependent) variables.
By analogy with terminology accepted for abstract Lie algebras,
we will call such limits of parameterized classes and their Lie symmetries as {\it contractions of equations and Lie invariance algebras}
correspondingly.

First we introduce some related notions and
formulate two simple but very useful statements concerning contractions of equations in the general case.

Consider the class~$\{\mathcal{L}^\lambda\}$ of systems~$\mathcal{L}^\lambda$:
$L(x,u_{(\prho)},\lambda)=0$
of $l$~differential equations for $m$~unknown functions $u=(u^1,\ldots,u^m)$
of $n$~independent variables $x=(x_1,\ldots,x_n)$,
which are parameterized with the parameter~$\lambda.$
Here $u_{(\prho)}$ denotes the set of all derivatives of~$u$ with respect to $x$
of order not greater than~$\prho$, including $u$ as the derivatives of the zero order.
$L=(L^1,\ldots,L^l)$ is a tuple of $l$ fixed functions depending on $x,$ $u_{(\prho)}$ and $\lambda$.
For simplicity we assume $\lambda$ as a single numeric (real or complex) parameter.
(Extension to more general case with respect to $\lambda$ is obvious.)
We also suppose that the systems~$\mathcal{L}^\lambda$ are totally nondegenerate.

Let $\hat{\mathcal{L}}^\lambda=\{\hat L^k(x,u_{(\prho)},\lambda)=0,\ k=1,\ldots,\hat l\}$
be a maximal set of algebraically independent differential consequences of $\mathcal{L}^\lambda$
that have, as differential equations, orders not greater than $\prho$.
All such sets of differential consequences determine the same manifold~$\bar{\mathcal{L}}^\lambda$
in the jet space~$J^{(\prho)}$.
Let us fix a value of the parameter $\lambda=\lambda_0$ and
a point~${\bf j}_0\in\bar{\mathcal{L}}^{\lambda_0}$.

\begin{definition}
The systems~$\mathcal{L}^\lambda$ {\it weakly converge} to
the system~$\mathcal{L}^{\lambda_0}$ near the point~${\bf j}_0\in\bar{\mathcal{L}}^{\lambda_0}$
under $\lambda\to\lambda_0$,
iff for any $\lambda$ from a deleted neighborhood~$\Lambda$ of~$\lambda_0$
there exist $\hat{\mathcal{L}}^\lambda$ and
an open neighborhood~$\Omega$ of~${\bf j}_0$ in~$J^{(\prho)}$ such that

1)\ $\hat L^k(x,u_{(\prho)},\lambda)$,  $k=1,\ldots,\hat l$ and their partial derivatives
with respect to $x$ and $u_{(\prho)}$
converge to differential consequences of $\mathcal{L}^{\lambda_0}$ and their corresponding derivatives
pointwise on~$\Omega$ under $\lambda\to\lambda_0$;

2)\ if $\hat L^k$,  $k=1,\ldots,\hat l$ are extended to~$\lambda_0$
by continuity then for an $\hat l$-element subset of the variables~$u_{(\prho)}$ the Jacobian
$|\p\hat L^k/\p v^{k'}|\not=0$ on~$\Omega$ for any~$\lambda$ from a neighborhood of~$\lambda_0$.

We will use the denotation $\mathcal{L}^\lambda\rightharpoonup\mathcal{L}^{\lambda_0}$, $\lambda\to\lambda_0$.
\end{definition}

Consider the parameterized set $\{Q^\lambda,\;\lambda\in\Lambda\}$ of the operators
\[
Q^\lambda=\sum_{i=1}^n\xi^i(x,u,\lambda)\p_{x_i}+\sum_{j=1}^m\eta^j(x,u,\lambda)\p_{u^j}
\]
on the space of variables~$(x,u)$ and a fixed operator
\[
Q^0=\sum_{i=1}^n\xi^{0i}(x,u)\p_{x_i}+\sum_{j=1}^m\eta^{0j}(x,u)\p_{u^j}.
\]
The notation $\mathop{\rm pr}\nolimits_{(\prho)}Q^\lambda\to\mathop{\rm pr}\nolimits_{(\prho)} Q^0$,
$\lambda\to\lambda_0$ denotes that the coefficients of the standard $\prho$-order prolongation
of~$Q^\lambda$ to~$J^{(\prho)}$ pointwise converge to the corresponding coefficients
of the $\prho$-order prolongation of the operator~$Q^0$ on the set~$\Omega$.

\begin{lemma}
Let for any $\lambda$ from a deleted neighborhood of $\lambda_0$ the system~$\mathcal{L}^\lambda$ be Lie invariant
with respect to  an operator~$Q^\lambda$,
$\mathcal{L}^\lambda\rightharpoonup\mathcal{L}^{\lambda_0}$ and
$\mathop{\rm pr}\nolimits_{(\prho)}Q^\lambda\to\mathop{\rm pr}\nolimits_{(\prho)} Q^0$,
$\lambda\to\lambda_0$.
Then the system~$\mathcal{L}^{\lambda_0}$ is Lie invariant with respect to the operator~$Q^0$.
\end{lemma}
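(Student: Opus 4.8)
The plan is to express the hypothesis that $Q^\lambda$ is a Lie symmetry of $\mathcal{L}^\lambda$ through the infinitesimal invariance criterion written in terms of the consequences $\hat L^k$, and then to pass to the limit $\lambda\to\lambda_0$ factor by factor. Because the systems are totally nondegenerate and the functions $\hat L^k(x,u_{(\prho)},\lambda)$, $k=1,\dots,\hat l$, form a maximal set of algebraically independent differential consequences of order not greater than $\prho$ that cut out $\bar{\mathcal{L}}^\lambda$ in $J^{(\prho)}$, the assumed invariance is equivalent to the vanishing of $\mathop{\rm pr}\nolimits_{(\prho)}Q^\lambda\,\hat L^k$ on $\bar{\mathcal{L}}^\lambda$ for each $k$ and each $\lambda$ in the deleted neighborhood of $\lambda_0$.

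First I would make the phrase ``on the manifold'' explicit so that it can be handled analytically. By condition~2 of weak convergence the Jacobian $|\p\hat L^k/\p v^{k'}|$ does not vanish on $\Omega$ uniformly for $\lambda$ near $\lambda_0$, so the implicit function theorem lets me solve $\hat L^k=0$ locally for an $\hat l$-element subset $v^{k'}$ of the jet variables, say $v^{k'}=\psi^{k'}(w,\lambda)$ with $w$ the remaining coordinates of $J^{(\prho)}$. The invariance criterion then becomes the pointwise identity
\[
\bigl(\mathop{\rm pr}\nolimits_{(\prho)}Q^\lambda\,\hat L^k\bigr)\big|_{v^{k'}=\psi^{k'}(w,\lambda)}=0,\qquad k=1,\dots,\hat l,
\]
valid for all admissible $w$ and all $\lambda$ in the deleted neighborhood. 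The same theorem gives continuous dependence of $\psi^{k'}$ on $\lambda$, hence $\psi^{k'}(\cdot,\lambda)\to\psi^{k'}(\cdot,\lambda_0)$; moreover the uniform nonvanishing of the Jacobian keeps the codimension equal to $\hat l$ in the limit, so together with condition~1 the continuous extensions $\hat L^k(\cdot,\lambda_0)$ are $\hat l$ independent differential consequences of $\mathcal{L}^{\lambda_0}$ that locally determine $\bar{\mathcal{L}}^{\lambda_0}$ near ${\bf j}_0$.

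Next I would take $\lambda\to\lambda_0$ in the displayed identity. The quantity $\mathop{\rm pr}\nolimits_{(\prho)}Q^\lambda\,\hat L^k$ is a finite sum of products of the prolongation coefficients of $Q^\lambda$ with partial derivatives of $\hat L^k$ in $x$ and $u_{(\prho)}$: the former converge pointwise on $\Omega$ to the coefficients of $\mathop{\rm pr}\nolimits_{(\prho)}Q^0$ by hypothesis, and the latter converge by condition~1, which supplies exactly the convergence of $\hat L^k$ together with all the partial derivatives that enter the prolongation. Combining these limits with $\psi^{k'}(\cdot,\lambda)\to\psi^{k'}(\cdot,\lambda_0)$ and continuity, the identity survives the limit and yields
\[
\bigl(\mathop{\rm pr}\nolimits_{(\prho)}Q^0\,\hat L^k(x,u_{(\prho)},\lambda_0)\bigr)\big|_{v^{k'}=\psi^{k'}(w,\lambda_0)}=0 .
\]
Since the $\hat L^k(\cdot,\lambda_0)$ locally cut out $\bar{\mathcal{L}}^{\lambda_0}$, this is precisely the infinitesimal invariance criterion for $Q^0$ on $\mathcal{L}^{\lambda_0}$, which is the assertion.

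I expect the only genuinely delicate point to be the interchange of the limit $\lambda\to\lambda_0$ with the restriction to the solution manifold: the invariance criterion is not an identity in the full jet space but only along $\bar{\mathcal{L}}^\lambda$, and the manifold itself moves with $\lambda$. This is exactly where condition~2 of the definition of weak convergence is indispensable, since the uniform nonvanishing of the Jacobian furnishes, via the implicit function theorem, a family of local parameterizations $\psi^{k'}(\cdot,\lambda)$ depending continuously on $\lambda$ and guarantees that the limiting relations still cut out a manifold of the right codimension; without it one could neither control the moving restriction nor ensure that $\bar{\mathcal{L}}^{\lambda_0}$ is recovered in the limit.
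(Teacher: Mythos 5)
Your proof is correct and takes essentially the same approach as the paper: the paper's entire proof consists of the single remark that the lemma ``directly follows from the infinitesimal invariance criterion,'' and your argument is precisely that criterion applied to the consequences $\hat L^k$, with the passage to the limit along the moving manifold $\bar{\mathcal{L}}^\lambda$ (via condition~2 of weak convergence and the implicit function theorem) spelled out explicitly. The details you supply are a faithful elaboration of what the paper leaves implicit.
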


Proof directly follows from the infinitesimal invariance criterion.

\begin{corollary}\label{CorollaryOnDifferentiationUnderConstruction}
Suppose that for any $\lambda$ from a deleted neighborhood of $\lambda_0$ the system~$\mathcal{L}^\lambda$ is Lie invariant
with respect to a fixed operator~$Q^0$ and with respect to an operator~$Q^\lambda$
sufficiently smoothly depending on value of~$\lambda$,
$Q^0$ and $Q^\lambda$ are linearly independent,
$\mathcal{L}^\lambda\rightharpoonup\mathcal{L}^{\lambda_0}$ and
$\mathop{\rm pr}\nolimits_{(\prho)}Q^\lambda\to\mathop{\rm pr}\nolimits_{(\prho)} Q^0$,
$\lambda\to\lambda_0$.
Then the system~$\mathcal{L}^{\lambda_0}$ is Lie invariant with respect to the operators~$Q^0$ and
$d^kQ^\lambda/d\lambda^k\bigr|_{\lambda=\lambda_0}$.
Here $d^kQ^\lambda/d\lambda^k$ is the operator with coefficients being $k$-th order derivatives of
the corresponding coefficients of the operator~$Q^\lambda$ with respect to~$\lambda$,
and $k$ is the first integer for which $d^kQ^\lambda/d\lambda^k\bigr|_{\lambda=\lambda_0}$ and $Q^0$ are
linearly independent.
\end{corollary}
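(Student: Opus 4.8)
The plan is to exploit the linearity of the infinitesimal invariance criterion and then reduce matters to the preceding Lemma. For each fixed $\lambda$ in the deleted neighborhood of $\lambda_0$, the set of Lie symmetries of the single system $\mathcal{L}^\lambda$ is a linear space containing both $Q^0$ and $Q^\lambda$; hence every combination $\alpha Q^0+\beta Q^\lambda$ with $\alpha,\beta$ constant in $(x,u)$ is again a symmetry of $\mathcal{L}^\lambda$. The idea is to assemble from such combinations a new $\lambda$-family $R^\lambda$ of symmetries of $\mathcal{L}^\lambda$ whose prolongations converge, as $\lambda\to\lambda_0$, to the prolongation of $Q^{(k)}:=d^kQ^\lambda/d\lambda^k\bigr|_{\lambda=\lambda_0}$, and then to apply the Lemma to this family.

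First I would record two preliminary facts. Since $\mathop{\rm pr}\nolimits_{(\prho)}Q^\lambda\to\mathop{\rm pr}\nolimits_{(\prho)}Q^0$ and $Q^\lambda$ depends smoothly on $\lambda$, continuity gives $\mathop{\rm pr}\nolimits_{(\prho)}Q^\lambda\bigr|_{\lambda_0}=\mathop{\rm pr}\nolimits_{(\prho)}Q^0$; as a prolongation determines its generating operator, this forces $Q^\lambda\bigr|_{\lambda_0}=Q^0$, that is $Q^{(0)}=Q^0$. Next, for each $j=1,\dots,k-1$ the operators $Q^{(j)}$ and $Q^0$ are linearly dependent by the minimality of $k$, and since $Q^0\neq0$ (being independent of $Q^\lambda$) this forces $Q^{(j)}=c_jQ^0$ for some constant $c_j$. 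Putting $c_0=1$ and $P(\lambda)=\sum_{j=0}^{k-1}\tfrac{c_j}{j!}(\lambda-\lambda_0)^j$, I would then define
\[
R^\lambda=\frac{k!}{(\lambda-\lambda_0)^{k}}\bigl(Q^\lambda-P(\lambda)\,Q^0\bigr).
\]
For every fixed $\lambda$ this is a constant-coefficient combination of $Q^0$ and $Q^\lambda$, hence a Lie symmetry of $\mathcal{L}^\lambda$.

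It remains to pass to the limit. Because the prolongation is linear in the operator and its coefficients depend smoothly, jointly in $(x,u_{(\prho)},\lambda)$, on $\lambda$, I can apply Taylor's theorem in $\lambda$ to each coefficient of $\mathop{\rm pr}\nolimits_{(\prho)}R^\lambda$ at $\lambda_0$. The polynomial $P(\lambda)Q^0$ is chosen to cancel exactly the Taylor terms of $Q^\lambda$ of orders $0,\dots,k-1$ (these being the multiples $c_jQ^0$ of $Q^0$), so that after division by $(\lambda-\lambda_0)^k$ the only surviving contribution is the $k$-th order term; thus $\mathop{\rm pr}\nolimits_{(\prho)}R^\lambda\to\mathop{\rm pr}\nolimits_{(\prho)}Q^{(k)}$ pointwise on~$\Omega$. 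Since $\mathcal{L}^\lambda\rightharpoonup\mathcal{L}^{\lambda_0}$ by hypothesis, the Lemma applied to the family $\{R^\lambda\}$ with limit operator $Q^{(k)}$ yields that $\mathcal{L}^{\lambda_0}$ is Lie invariant with respect to $Q^{(k)}$. Invariance of $\mathcal{L}^{\lambda_0}$ with respect to $Q^0$ follows at once from the same Lemma applied to the constant family $Q^\lambda\equiv Q^0$.

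The main obstacle is the rescaled convergence in the last step: dividing by $(\lambda-\lambda_0)^k$ amplifies the coefficients, so a naive limit would diverge, and everything hinges on the cancellation engineered by subtracting $P(\lambda)Q^0$. This cancellation is legitimate only because each $Q^{(j)}$ with $j<k$ is a genuine scalar multiple of $Q^0$, which is precisely what keeps $P(\lambda)Q^0$ a symmetry of $\mathcal{L}^\lambda$ and hence $R^\lambda$ a symmetry for every fixed $\lambda$; I would therefore verify the linear-dependence step with care. One should also note that the existence of a finite $k$ is implicit in the statement (it is automatic, for instance, when $Q^\lambda$ is analytic in $\lambda$, since then linear independence of $Q^\lambda$ and $Q^0$ precludes the difference $Q^\lambda-Q^0$ from being flat at $\lambda_0$).
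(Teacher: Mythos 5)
Your proof is correct and takes essentially the same route as the paper: the paper's one-line proof applies the Lemma to the two pairs $\tilde Q^\lambda=\tilde Q^0=Q^0$ and $\tilde Q^\lambda=(\lambda-\lambda_0)^{-k}\bigl(Q^\lambda-\sum_{i=0}^{k-1}\frac{1}{i!}\frac{d^iQ^\lambda}{d\lambda^i}\bigr|_{\lambda=\lambda_0}(\lambda-\lambda_0)^i\bigr)$, $\tilde Q^0=\frac{d^kQ^\lambda}{d\lambda^k}\bigr|_{\lambda=\lambda_0}$, which is exactly your constant family and your $R^\lambda$ (up to the harmless normalizing factor $k!$). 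Your write-up simply supplies the details the paper leaves implicit --- that $Q^\lambda|_{\lambda=\lambda_0}=Q^0$, that minimality of $k$ forces each lower-order derivative to be a constant multiple of $Q^0$ (so the subtracted polynomial is itself a symmetry of $\mathcal{L}^\lambda$), and the Taylor-expansion cancellation giving convergence of the prolongations.
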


\begin{proof}
Both pairs $\tilde Q^\lambda=\tilde Q^0=Q^0$ and
\[
\tilde Q^\lambda=Q^\lambda-
(\lambda-\lambda_0)^{-k}\biggl(
\sum_{i=0}^{k-1}\dfrac{d^iQ^\lambda}{d\lambda^i}\Bigr|_{\lambda=\lambda_0}(\lambda-\lambda_0)^i
\biggr), \qquad
\tilde Q^0=\dfrac{d^kQ^\lambda}{d\lambda^k}\Bigr|_{\lambda=\lambda_0}
\]
satisfy the conditions of the previous lemma.
\end{proof}

\begin{note}
For contracted systems of differential equations we can introduce
the notion of contractions of solutions and then investigate connections between
contractions of Lie invariant solutions and contractions of corresponding subalgebras of
the Lie invariance algebras (see Section~\ref{SectionContrSolutions}).
It is also possible to consider contractions of involutive sets of
reduction operators and solutions which are invariant in the non-classical (conditional) sense.
The study of the above contractions and contractions of symmetries and invariant solutions of
other kinds will be the subject of our further investigations.
\end{note}

There exists a number of different non-trivial contractions of equations from class~(\ref{eqDKfgh})
and, in particular, equations having wider Lie invariance algebras
than the kernel algebra~$A^{\cap}=\langle\p_t\rangle$. Contractions allow us to find
additional connections between different cases of extension of maximal Lie invariance algebra
in class~(\ref{eqDKfgh}) and between solutions of equations in these cases.

The simplest contractions are the ones in parameterized subclasses of equations with respect to their parameters.
For instance, case~2.\ref{AeuB0ff1h}c is the limit of case~2.\ref{AeuB0ff1h}b under $p\to-2$.
The Lie invariance algebra of case~2.\ref{AeuB0ff1h}c
can be obtained from that of case~2.\ref{AeuB0ff1h}b by means of the contraction with $p\to-2$:
\[
Q^{1p}\to Q^1,\quad  Q^{2p}\to Q^2, \quad
-\dfrac{Q^{3p}+\varepsilon Q^{2p}-Q^{1p}}{\varepsilon(p+2)}\to Q^3, \quad p\to-2.
\]
Hereafter the operators are taken from the corresponding tables.
The ``left-hand-side'' (``right-hand-side'') operators are from the given basis of the case
which is the object (target) of the contraction.
The basic operators are numerated in accordance with their order at the tables.

The above contraction illustrates corollary~\ref{CorollaryOnDifferentiationUnderConstruction}
since $Q^{3p}+\varepsilon Q^{2p}\to Q^{1p}=Q^1$, $p\to-2$ and $Q^1=d(Q^{3p}+\varepsilon Q^{2p})/dp\big|_{p=-2}$.
The tuples $(Q^{1p},Q^{2p},Q^{3p})$ and $(Q^1,Q^2,Q^3)$ determine equivalent third-rank realizations
of the algebra~$A_{2.1}\oplus A_1$.
We use the Mubarakzyanov's notations~\cite{Mubarakzyanov1963a} of low-dimensional solvable real Lie algebras.
$A_1$ denotes the one-dimensional Lie algebra and $A_{2.1}$ does the non-Abelian two-dimensional Lie algebra.
It follows from results of~\cite{Popovych&Boyko&Nesterenko&Lutfullin2003} that
up to weak equivalence of realizations there exists the unique third-rank realization
of the algebra~$A_{2.1}\oplus A_1$.
Therefore, equivalence of realizations~$\langle Q^{1p},Q^{2p},Q^{3p}\rangle$ and $\langle Q^1,Q^2,Q^3\rangle$
is foreknown.

In table~3 an analogous contraction is \ref{AumB0ff3hh}.b${}\to{}$\ref{AumB0ff3hh}.c, $p\to-2$, where
\[
Q^{1p}\to Q^1,\quad Q^{2p}\to Q^2, \quad
-\dfrac{\mu Q^{3p}+\varepsilon Q^{2p}-Q^{1p}}{\varepsilon(p+2)}\to Q^3, \quad p\to-2.
\]

Existence of analogous cases of extensions of maximal Lie invariance algebras
and analogous contractions in tables~2 and~3 can be explained in terms of contractions.

Consider equations of the form
\[
f(x)u_t=(g(x)|u|^\mu u_x)_x+h(x)B(u)u_x,
\]
covering the cases from table~3 (or~3$'$), which are parameterized by~$\mu$.
The parameter-function~$B$ takes the values
\[
|u|^\nu,\quad |u|^\nu+\varkappa,\quad 1, \quad 0, \quad |u|^\mu\ln|u|,\quad \ln|u|.
\]
We do not specify here the values of $f(x)$ and $h(x)$ explicitly.
Limits with respect to exponents of~$|u|$ are more cumbersome than the already considered ones.
At first we have to carry out the transformation parameterized with a parameter~$\delta$:
\[
u=1+\frac{\tilde u}\delta, \quad \mu=\delta\tilde\mu, \quad \nu=\delta\tilde\nu,
\quad \alpha=\delta\tilde\alpha,
\quad \gamma_1=\delta\tilde\gamma_1, \quad \gamma_0=\delta\tilde\gamma_0.
\]
Hereafter we change parameters iff they are in the corresponding equation.
For two latter values of~$B$ we have additionally to transform the independent variables and the
parameters~$p$ and~$q$:
\[
t=\delta^2\tilde t,\quad x=\delta\tilde x,\quad
\tilde p=\delta^2 p,\quad \tilde q=\delta q.
\]
Then, we proceed to the limit $\delta\to+\infty$. The limit values are $\bar A=e^{\tilde\mu u}$ and
\[
e^{\tilde\nu u}, \quad e^{\tilde\nu u}+\varkappa, \quad  1, \quad 0, \quad ue^{\tilde\mu u}, \quad u
\]
for $\bar B$ (correspondingly to the order of the above values of~$B$).
For the most of contracted cases the parameter-functions~$f$ and~$h$ do not formally change their values.
The exceptions are the cases~3.\ref{AumB0ff3hh}a, 3.\ref{AumB0f1h}d and 3.\ref{AumB0f1h}e
where $f$ and $h$ depend explicitly on~$\mu$. The described way results in the following contractions:
\begin{gather*}
3.\ref{AumB0fh}_{\mu}\to 2.\ref{AeuB0faha}, \quad
3.\ref{AumBunfxlhxg}_{\mu\nu pq}\to 2.\ref{AeuBenufemxhex}_{\tilde\mu\tilde\nu pq}, \quad
3.\ref{AumBunfxlhxg}^{*}{}_{\mu\nu p}\to 2.\ref{AeuBenufemxhex}^{*}{}_{\tilde\mu\tilde\nu p}, \quad
3.\ref{aumbumlnufeax2+xghheax}_{\mu pq}\to 2.\ref{AeuBueufex2+xghhex2}_{pq}, \quad
3.\ref{AumBunkf1h1}_{\mu\nu pq}\to 2.\ref{AeuBenukf1h1}_{\tilde\nu pq},\!
\\
3.\ref{AumBlnuf1h1}_{\mu}\to 2.\ref{AeuBuf1h1}, \quad
3.\ref{AumB0ff3hh}{\rm a}_{\mu\alpha\beta\gamma_1\gamma_0}\to
   2.\ref{AeuB0ff1h}{\rm a}_{\tilde\alpha\beta\tilde\gamma_1\tilde\gamma_0},\quad
3.\ref{AumB0ff3hh}{\rm b}_{p}\to 2.\ref{AeuB0ff1h}{\rm b}_{\tilde p},\quad
3.\ref{AumB0ff3hh}{\rm b^{*}}\to 2.\ref{AeuB0ff1h}{\rm b^{*}},\quad
\\
3.\ref{AumB0f1h}{\rm a}_{\mu}\to 2.\ref{AeuB0f1h}{\rm a},\quad
3.\ref{AumB0f1h}{\rm b}_{\mu}\to 2.\ref{AeuB0f1h}{\rm b},\quad
3.\ref{AumB0f1h}{\rm c}_{\mu}\to 2.\ref{AeuB0f1h}{\rm c},\quad
3.\ref{AumB0f1h}{\rm d}_{\mu}\to 2.\ref{AeuB0f1h}{\rm d},\quad
3.\ref{AumB0f1h}{\rm e}_{\mu}\to 2.\ref{AeuB0f1h}{\rm e}.
\end{gather*}
Here $\tilde\mu=1$ in all the above cases from table~2 excluding case~2.2.

Analogous contractions can be performed in cases when the parameter-functions~$f$ and~$h$
are powers by means of the transformation
\begin{gather*}
t=\delta^{\mu-2\nu}\tilde t,\quad x=1+\frac{\tilde x}\delta,\quad p=\delta\tilde p,\quad q=\delta\tilde q, \quad
u=\tilde u+\frac1{\nu-\mu}\ln\delta\quad \bigl(\mbox{or} \ u=\delta^{\frac1{\nu-\mu}}\tilde u \bigr)
\end{gather*}
for cases from tables 2 (or 3 correspondingly)
and proceeding to the limit $\delta\to+\infty$:
\begin{gather*}
2.\ref{AumBunfxlhxg}_{\mu\nu pq}\to 2.\ref{AeuBenufemxhex}^{*}{}_{\mu\nu\tilde p,\tilde q=1}, \quad
2.\ref{AeuB0ff1h}{\rm b}_{\tilde p}\to 2.\ref{AeuB0ff1h}{\rm b}^{*}{}_{\tilde p=1},\quad
3.\ref{AumBunfxlhxg}_{\mu\nu pq}\to 3.\ref{AumBunfxlhxg}^{*}{}_{\mu\nu\tilde p,\tilde q=1}, \quad
3.\ref{AumB0ff3hh}{\rm b}_{p}\to 3.\ref{AumB0ff3hh}{\rm b^{*}}.
\end{gather*}
To obtain similar contractions of cases from table~1, we have to use another transformation
of the variables and parameters:
\[
t=\dfrac{\tilde t}\delta,\quad x=1+\frac{\tilde x}\delta,\quad u=\tilde u,\quad
p=\delta\tilde p,\quad A=\dfrac{\tilde A}\delta
\]
that leads to the contractions
$1.\ref{gcAaBafexh1}{\rm a}'{}_{pA}\to 1.\ref{gcAaBafexh1}{\rm a}_{\tilde p\tilde A}$,
$1.\ref{gcAaBafexh1}{\rm c}_{p\beta A}\to 1.\ref{gcAaBafexh1}{\rm b}_{\tilde p\beta\tilde A}$.

The question whether the above contractions exhaust non-trivial ones between equations from class~(\ref{eqDKfgh}),
which admit extensions of maximal invariance algebras, remains open.

\section{Construction of exact solutions\\ via equivalence transformations}\label{SectionOnExactSolFromEquiv}

Results of group classification and additional equivalence transformations can be used for construction of exact solutions of equations
from class~\eqref{eqDKfgh} admitting extensions of maximal Lie invariance group.
In~\cite{Popovych&Ivanova2004NVCDCEs,Ivanova&Sophocleous2006} known exact solutions of ``constant coefficient''
diffusion--convection equations were mapped
to ones of variable coefficient equations~\eqref{eqDKfgh} 
by means of additional equivalence transformations.

In view of Theorem~5 of~\cite{Ivanova&Popovych&Sophocleous2006Part1}
if an equation of form~\eqref{eqDKfgh} is invariant with respect to a Lie algebra
of dimension not less than 4 then it can be reduced by point transformations to one with $f=g=h=1$.
Therefore, we can reconstruct solutions of these variable-coefficient equations from the already known solutions
using additional equivalence transformations found in~\cite{Ivanova&Popovych&Sophocleous2006Part1}.
Thus, e.g., equation~3.\ref{AumB0f1h}g
\begin{equation}\label{EqToSolve14g}
e^xu_t=(u^{-1}u_x)_x+\varepsilon e^xu_x
\end{equation}
can be transformed to the fast diffusion equation $u_t=(u^{-1}u_x)_x$
with the point transformation
\begin{equation}\label{Transform14gToFDE}
\tilde t=e^{\varepsilon t}/\varepsilon,\quad \tilde x=x+\varepsilon t,\quad \tilde u=e^{x+\varepsilon t}u.
\end{equation}

The list of known $G^{\max}$-inequivalent Lie solutions of the fast diffusion equation
is exhausted by the following ones~\cite{Popovych&Vaneeva&Ivanova2005}:
\begin{gather}\label{Solutions.for.u-1}
\begin{split}&
1)\ u=\dfrac{1}{1+\delta e^{x+t}},\qquad
2)\ u=e^{x},\qquad
3)\ u=\dfrac{1}{x-t+C te^{-x/t}},\qquad
4)\ u=\dfrac{2t}{x^2+\delta t^2},
\\[1ex]&
5)\ u=\dfrac{2t}{\cos^2x},\qquad
6)\ u=\dfrac{-2t}{\cosh^2x},\qquad
7)\ u=\dfrac{2t}{\sinh^2x},\qquad
8)\ u=\dfrac{2\sin 2t}{\cos2t-\cos 2x},\\[1ex]&
9)\ u=\dfrac{2\sinh 2t}{\cosh 2x-\cosh 2t},\qquad
10)\ u=-\dfrac{2\sinh 2t}{\cosh 2x+\cosh 2t},\\[1ex]&
11)\ u=\dfrac{2\cosh 2t}{\sinh 2x-\sinh 2t},\qquad
12)\ u=\dfrac{2\sin 2t}{\cosh 2x-\cos 2t},\qquad
13)\ u=\dfrac{2\sinh 2t}{\cosh 2t-\cos 2x}.
\end{split}\end{gather}
Here $\delta=\pm1$, $C$ is an arbitrary constant; 1)--7) are Lie invariant solutions and 8)--13) are obtained with nonclassical potential symmetries.

Applying transformation~\eqref{Transform14gToFDE} to~\eqref{Solutions.for.u-1} we obtain a list of exact solutions
of the variable-coefficient equation~\eqref{EqToSolve14g}:
\begin{gather*}
u=\dfrac{e^{-(x+\varepsilon t)}}{1+\delta e^{x+\varepsilon t+e^{\varepsilon t}/\varepsilon}},\qquad
u=\dfrac{\varepsilon e^{-(x+\varepsilon t)}}{\varepsilon x+\varepsilon^2t-e^{\varepsilon t}
+Ce^{\varepsilon t-\varepsilon(x+\varepsilon t)e^{-\varepsilon t}}},
\\[1ex]
u=\dfrac{2e^{-x}}{\varepsilon(x+\varepsilon t)^2+\delta e^{2\varepsilon t}},\qquad
u=\dfrac{2e^{-x}}{\varepsilon\cos^2(x+\varepsilon t)},\qquad u=-\dfrac{2e^{-x}}{\varepsilon\cosh^2(x+\varepsilon t)},
\\[1ex]
u=\dfrac{2e^{-x}}{\varepsilon\sin^2(x+\varepsilon t)},\qquad
u=\dfrac{2e^{-(x+\varepsilon t)}\sin(2e^{\varepsilon t}/\varepsilon)}{\cos(2e^{\varepsilon t}/\varepsilon)-\cos2(x+\varepsilon t)},
\\[1ex]
u=\dfrac{2e^{-(x+\varepsilon t)}\sinh(2e^{\varepsilon t}/\varepsilon)}{\cosh2(x+\varepsilon t)-\cosh(2e^{\varepsilon t}/\varepsilon)},\quad
u=-\dfrac{2e^{-(x+\varepsilon t)}\sinh(2e^{\varepsilon t}/\varepsilon)}{\cosh2(x+\varepsilon t)+\cosh(2e^{\varepsilon t}/\varepsilon)},
\\[1ex]
u=\dfrac{2e^{-(x+\varepsilon t)}\cosh(2e^{\varepsilon t}/\varepsilon)}{\sinh2(x+\varepsilon t)-\sinh(2e^{\varepsilon t}/\varepsilon)},\quad
u=\dfrac{2e^{-(x+\varepsilon t)}\sin(2e^{\varepsilon t}/\varepsilon)}{\cosh2(x+\varepsilon t)-\cos(2e^{\varepsilon t}/\varepsilon)},
\\[1ex]
u=\dfrac{2e^{-(x+\varepsilon t)}\sinh(2e^{\varepsilon t}/\varepsilon)}{\cosh(2e^{\varepsilon t}/\varepsilon)-\cos2(x+\varepsilon t)}.
\end{gather*}

The lists of known exact solutions and Lie reductions of the constant coefficient and some of the variable coefficient equations
can be found, e.g., in~\cite{Ivanova2006DiffusionSolutions,Polyanin&Zaitsev2004}.
Using them one can easily obtain exact solutions of any of the variable coefficient equations having the Lie symmetry algebra
of dimension not less then 4.
Since the solutions of these equations can be reconstructed from
ones presented in other sections, we can turn back to the more interesting cases.

\section{Example of Lie reduction}\label{SectionOnExactSolOfGenBurEq}

In this section we consider in more details Lie reductions of variable coefficient equation~3.\ref{a1bufeax2ghheax2}
\begin{equation}\label{Eqa1bufeax2ghheax2}
e^{px^2}u_t=(e^{px^2}u_x)_x+e^{px^2}uu_x,
\end{equation}
which is invariant with respect to the three-dimensional Lie symmetry algebra
\[
\langle\p_t,\,e^{-2pt}\p_x,\,\p_x-2p\p_u\rangle.
\]
In contrast to the case of equations with four-dimensional Lie symmetry algebra
we cannot reduce equation~\eqref{Eqa1bufeax2ghheax2} to an equation
of form~\eqref{eqDKfgh} with $f=g=h=1$.
However, it is an interesting feature of this equation that using a point transformation
$v=u+2px$ it can be mapped to a constant coefficient reaction--convection--diffusion equation
\begin{equation}\label{EqCCa1bufeax2ghheax2}
v_t=v_{xx}+vv_x-2pv
\end{equation}
that does not belong to class~\eqref{eqDKfgh}.
Similarly to Section~\ref{SectionOnExactSolFromEquiv} for simplification of the technical calculations
we will investigate the constant coefficient equation~\eqref{EqCCa1bufeax2ghheax2} instead of~\eqref{Eqa1bufeax2ghheax2}.
The Lie symmetry algebra of equation~\eqref{EqCCa1bufeax2ghheax2}
\[
\langle X_1=\p_t,\,X_2=e^{-2pt}(\p_x+2p\p_v),\,X_3=\p_x\rangle
\]
is a realization of $A_{2.1}\oplus A_1$~\cite{Mubarakzyanov1963a}.
These operators generate the following group of point transformations:
\[
\tilde t=t+\varepsilon_1,\quad \tilde x=x+\varepsilon_2e^{-2pt}+\varepsilon_3,\quad \tilde v=v+2\varepsilon_2pe^{-2pt}.
\]
A list of proper inequivalent subalgebras of the given algebra is exhausted by the following ones
\[
\langle X_1+\alpha X_3\rangle,\quad \langle X_3+\varepsilon X_2\rangle,\quad \langle X_2\rangle,\quad
\langle X_1,\,X_3\rangle,\quad \langle X_2,\,X_3\rangle,\quad \langle X_1+\beta X_3,\,X_2\rangle,
\]
where $\alpha$ and $\beta$ are arbitrary constants, $\varepsilon=0,\pm1$~\cite{Patera&Winternitz1977}.

The first three (one-dimensional) subalgebras lead to Lie reductions to ordinary differential equations,
the fourth and sixth (two-dimensional) ones yield reductions to algebraic equations.
Lie reductions with respect to these subalgebras are summarized in table~4.
One can easily check that it is impossible to construct a Lie ansatz corresponding to the subalgebra $\langle X_2,\,X_3\rangle$.

{\begin{center}
Table~4. Lie reductions of equation~\eqref{EqCCa1bufeax2ghheax2}.
\\[1.5ex] \footnotesize
\begin{tabular}{|l|l|c|c|l|}
\hline \vspacebefore
N&Subalgebra& Ansatz $v=$& $\omega$ &\hfill {Reduced equation\hfill} \\
\hline \vspacebefore
1&$\langle X_1+\alpha X_3\rangle$ & $\varphi(\omega)$ & $x-\alpha t$ & $\varphi''+(\varphi+\alpha)\varphi'-2p\varphi=0$\\[0.5ex]
 \vspacebefore
2&$\langle X_3+\varepsilon X_2\rangle$ &  $\varphi(\omega)+\dfrac{2p\varepsilon x}{e^{2pt}+\varepsilon}$ & $t$
& $\varphi'=\dfrac{2pe^{2p\omega}}{e^{2p\omega}+\varepsilon}\varphi$\\[0.7ex]
 \vspacebefore
3&$\langle X_2\rangle$ & $\varphi(\omega)+2px$ & $t$
& $\varphi'=0$\\[0.5ex]
 \vspacebefore
4&$\langle X_1,\,X_3\rangle$ & $C$ & --- & $C=0$\\[0.5ex]
 \vspacebefore
5&$\langle X_1+\beta X_3,\,X_2\rangle$ & $2px-2p\beta t+C$ & --- & $-2p\beta=0$\\[0.5ex]
\hline
\end{tabular}
\end{center}

We are succeeded in solving equations 2--5 from table~4 that give us the following invariant solutions of equation~\eqref{EqCCa1bufeax2ghheax2}:
\begin{gather*}
v=0,\quad v=2px+C,\quad v=\dfrac{2p\varepsilon x+Ce^{2pt}}{e^{2pt}+\varepsilon}.
\end{gather*}
The corresponding exact invariant solutions of equation~\eqref{Eqa1bufeax2ghheax2} have the form
\begin{gather}\label{solutions.for.Eqa1bufeax2ghheax2}
u=-2px,\quad u=C,\quad u=\dfrac{2p\varepsilon x+Ce^{2pt}}{e^{2pt}+\varepsilon}-2px,
\end{gather}
where $C$ is an arbitrary constant.

Ansatzes~4.2 and~4.3 give a hint for a possible form
\[
v=\varphi(t)x+\psi(t)
\]
of nonlinear separation of variables for construction of exact solutions of equation~\eqref{EqCCa1bufeax2ghheax2}.
Substitution of the ansatz to equation~\eqref{EqCCa1bufeax2ghheax2} leads to antireduction:
\[
\varphi'=\varphi^2-2p\varphi,\quad \psi'=\varphi\psi-2p\psi.
\]
Solving the above system of ODEs for $\varphi$ and $\psi$ we obtain exactly the solutions
of equations~4.2 and~4.3.

\begin{note}
Using the point transformation $\tilde t=e^{-2pt}$, $\tilde x=x$, $\tilde v=e^{2pt}v$ equation~\eqref{EqCCa1bufeax2ghheax2} can be mapped to
a variable coefficient Burgers equation $\tilde v_{\tilde t}=-2p{\tilde t}^{-1}\tilde v_{\tilde x\tilde x}-2p\tilde v\tilde v_{\tilde x}$
studied in~\cite{Kingston&Sophocleous1991}.
\end{note}

\begin{note}
The well-known Cole--Hopf transformation $v=2w_x/w$ reduces equation~\eqref{EqCCa1bufeax2ghheax2}
to the famous constant coefficient reaction--diffusion equation with weak nonlinearity
\[
w_t=w_{xx}-2pw\ln|w|.
\]
After application of the Cole--Hopf transformation to the list of known exact solutions (see, e.g.,~\cite{Polyanin&Zaitsev2004}) of the equation
with weak nonlinearity we obtain exactly solutions~\eqref{solutions.for.Eqa1bufeax2ghheax2} of equation~\eqref{EqCCa1bufeax2ghheax2}.
\end{note}

In this and previous sections we have constructed exact solutions of variable coefficient equations
of form~\eqref{eqDKfgh} reducible to constant coefficient ones which
either belong or not to class~\eqref{eqDKfgh}. In the next section we consider
an essentially variable coefficient equation that is distinguished with its symmetry properties from class~\eqref{eqDKfgh}.

\section{Group analysis of a remarkable {\mathversion{bold}$sl(2,\mathbb{R})$}-invariant\\ diffusion--convection equation}\label{SectionGrAnOfSl2Eq}

Analyzing the results of group classification, we can observe a number
of $\hat G^{\Equiv}$-inequivalent equations~\eqref{eqDKfgh}
which are invariant with respect to different realizations of the algebra~$sl(2,\mathbb{R})$.
The set of such equations is practically exhausted by
the well-known (``constant coefficient'') Burgers and $u^{-4/3}$-diffusion equations
and by the equations which are equivalent to them with respect to additional transformations
(cases~3.\ref{A1Buf1h1} and~3.\ref{A-43B0f1h}).
This set is supplemented by the unique essentially variable coefficient equation
\begin{equation}\label{A-65B1fx2hx2Copy}
x^2u_t=(u^{-6/5}u_x)_x+x^2u_x
\end{equation}
(case~3.\ref{A-65B1fx2hx2}). The $sl(2,\mathbb{R})$-invariance of~\eqref{A-65B1fx2hx2Copy} is directly connected
with the fact that $h$ is not constant.
The corresponding realization of the algebra~$sl(2,\mathbb{R})$ is quite different from ones of
cases~3.\ref{A1Buf1h1} and~3.\ref{A-43B0f1h} and
is the maximal Lie invariance algebra of equation~\eqref{A-65B1fx2hx2Copy}.
It was the reason for us to study equation~\eqref{A-65B1fx2hx2Copy} from the symmetry point of view in details.
Indeed, instead of equation~\eqref{A-65B1fx2hx2Copy} we investigate the simpler equation
\begin{equation}\label{A-65B1fx2hx2Copy1}
x^2v_t=vv_{xx}-\frac56(v_x)^2+x^2v_x
\end{equation}
for the function $v=u^{-6/5}$, i.e. $u=v^{-5/6}$.
The maximal Lie invariance algebra~$A^{\max}$ of~\eqref{A-65B1fx2hx2Copy1} is generated by the basis operators
\[
P_t=\p_t,\quad D=t\p_t+x\p_x+3v\p_v,\quad \Pi=t^2\p_t+(2tx+x^2)\p_x+6(t+x)v\p_v.
\]
These operators generate the following one-parameter groups of point transformations:
\begin{gather*}
P_t\colon \quad \tilde t=t+\varepsilon,\quad \tilde x=x,\quad \tilde v=v;
\\[1ex]
D\colon \quad \tilde t=e^{\varepsilon}t,\quad \tilde x=e^{\varepsilon}x,\quad  \tilde v=e^{3\varepsilon}v;
\\[.5ex]
\Pi\colon \quad  \tilde t=\dfrac{t}{1-\varepsilon t}, \quad
\tilde x=\dfrac{t+x}{1-\varepsilon(t+x)}-\dfrac{t}{1-\varepsilon t},\quad
\tilde v=\dfrac v{(1-\varepsilon(t+x))^6}.
\end{gather*}
The complete Lie invariance group~$G^{\max}$ is generated by both the above continuous transformations and
the discrete transformation of changing of sign in the triple~$(t,x,v)$.
The transformations from~$G^{\max}$ can be used for construction of new solutions from known ones.

A list of proper $G^{\max}$-inequivalent subalgebras of~$A^{\max}$ is exhausted by the algebras
$\langle P_t\rangle$, $\langle D\rangle$, $\langle P_t+\Pi\rangle$, $\langle P_t,\, D\rangle$.

The simplest Lie reduction of equation~\eqref{A-65B1fx2hx2Copy1} to an algebraic equation
is obtained with the algebra~$\langle P_t, D\rangle$. Namely, the ansatz~$v=Cx^3$ results in the algebraic equation
$C(C-2)=0$ with respect to the constant~$C$, i.e. either $C=0$ or $C=2$. Thus, we have two solutions of
equation~\eqref{A-65B1fx2hx2Copy1} $v=0$ and $v=2x^3$
which are invariant with respect to the translation and scale transformations simultaneously.

The above one-dimensional subalgebras reduce equation~\eqref{A-65B1fx2hx2Copy1} to ordinary differential equations.
Let us list the corresponding ansatzes and reduced equations
as well as some partial exact solutions of the reduced equations:

\medskip

\noindent
$\langle P_t\rangle:\quad v=\varphi(\omega),\quad \omega=x,$
\begin{equation}\label{A-65B1fx2hx2Copy1RedEq1}
\varphi\varphi_{\omega\omega}-\frac56\varphi_{\omega}^2=-\omega^2\varphi_\omega;\qquad
\varphi=C, \quad \varphi=2\omega^3.
\end{equation}
$\langle D\rangle:\quad  v=t^3\varphi(\omega),\quad \omega=\dfrac{x}{t},$
\begin{equation}\label{A-65B1fx2hx2Copy1RedEq2}
\varphi\varphi_{\omega\omega}-\frac56\varphi_{\omega}^2=3\omega^2\varphi-(\omega+1)\omega^2\varphi_{\omega};
\end{equation}
$\varphi=0, \quad \varphi=2\omega^3, \quad \varphi=\dfrac34\omega^4+2\omega^3, \quad \varphi=2\omega^3(\omega+1)^3,
\quad \varphi=\dfrac54\omega^4(\omega+1)^2+2\omega^3(\omega+1)^2$.
\\[2ex]
$\langle P_t+\Pi\rangle:\quad v=((t+x)^2+1)^3\varphi(\omega),\quad \omega=\arctan(t+x)-\arctan t,$
\begin{equation}\label{A-65B1fx2hx2Copy1RedEq3}
\varphi\varphi_{\omega\omega}-\frac56\varphi_{\omega}^2=-\varphi_{\omega}\sin^2\omega-6\varphi^2.
\end{equation}
Here $C$ is an arbitrary constant.

Finally, we have the following set of $G^{\max}$-inequivalent Lie invariant exact solutions
of equation~\eqref{A-65B1fx2hx2Copy1} (below $\delta\in\{0,1\}$):
\begin{gather*}
v=\delta,\quad
v=2x^3,\quad
v=\dfrac{3x^4}{4t}+2x^3, \quad
v=x(t+x)^2\left(\dfrac54\dfrac{x^3}{t^3}+2\dfrac{x^2}{t^2}\right).
\end{gather*}
We can extend the set of solutions with non-trivial invariance transformations:
\begin{gather*}
v=C(t+x)^6,\quad
v=2\dfrac{x^3}{t^3}(t+x)^3,\quad
v=\dfrac{3x^4}{4t}\dfrac{(t+x)^2}{Ct-1}+2\dfrac{x^3}{t^3}(t+x)^3,\\[1ex]
v=x\dfrac{(t+x)^2}{Ct+1}\left(\dfrac54\dfrac{x^3}{t^3}+2\dfrac{x^2}{t^2}(C(t+x)+1)\right).
\end{gather*}

All the constructed solutions are polynomials with respect to~$x$ of degree not greater than 6
with coefficients depending on~$t$, i.e. they have the form
\begin{equation}\label{PolynomXd6}
v=\sum_{i=0}^6 \varphi^i(t)x^i.
\end{equation}
The set of all solutions of the form~\eqref{PolynomXd6} is closed with respect to transformations
from~$G^{\max}$ and is exhausted, up to translations with respect to~$t$ and scale transformations,
by the above solutions $v=\delta$, $v=\delta(t+x)^6$ and the solutions given by the generalized ansatz
\begin{equation}\label{A-65B1fx2hx2SuperAnsatz}
v=2x^3+\varphi^4(t)x^4+\varphi^5(t)x^5+\varphi^6(t)x^6,
\end{equation}
where $\varphi^i$ solve the reduced system
\begin{equation}\label{A-65B1fx2hx2CopyReducedSystem}
\varphi^4_t = 7\varphi^5-\dfrac43(\varphi^4)^2, \quad
\varphi^5_t = 18\varphi^6-\dfrac43\varphi^4\varphi^5, \quad
\varphi^6_t = -\dfrac56(\varphi^5)^2+2\varphi^4\varphi^6.
\end{equation}

Reduction of equation~\eqref{A-65B1fx2hx2Copy1} with ansatz~\eqref{A-65B1fx2hx2SuperAnsatz} to
system~\eqref{A-65B1fx2hx2CopyReducedSystem} is a consequence of generalized nonclassical invariance
of~\eqref{A-65B1fx2hx2Copy1} with respect to the third-order evolutionary operator
\[
(x^3v_{xxx}-12x^2v_{xx}+60xv_x-120v+12x^3)\p_v.
\]
See, e.g.,~\cite{Fokas&Liu1994,Zhdanov1995} for a definition of generalized nonclassical
(or conditional Lie-B\"acklund) symmetries of evolution equations.

System~\eqref{A-65B1fx2hx2CopyReducedSystem} can be reduced to the single third-order ordinary
differential equation for the function~$\varphi^4$:
\begin{equation}\label{A-65B1fx2hx2EqForPhi4}
63\varphi^4_{ttt}+387(\varphi^4_t)^2+126\varphi^4\varphi^4_{tt}+192(\varphi^4)^2\varphi^4_t+16(\varphi^4)^4=0.
\end{equation}
The set of all solutions of the form~\eqref{A-65B1fx2hx2SuperAnsatz} is also closed with respect to transformations
from~$G^{\max}$ and, therefore, $G^{\max}$ naturally induces the invariance groups of
system~\eqref{A-65B1fx2hx2CopyReducedSystem} and equation~\eqref{A-65B1fx2hx2EqForPhi4}.
Indeed, the maximal Lie invariance algebra~$A'$ of~\eqref{A-65B1fx2hx2EqForPhi4} is generated by the operators
\[
P'=\p_t,\quad D'=t\p_t-\varphi^4\p_{\varphi^4},\quad \Pi'=t^2\p_t+(6-2t\varphi^4)\p_{\varphi^4}.
\]
Using one-dimensional subalgebras of~$A'$, we can reduce~\eqref{A-65B1fx2hx2EqForPhi4}
to algebraic equations and, after solving them we construct Lie invariant solutions of~\eqref{A-65B1fx2hx2EqForPhi4}
and, therefore, of~\eqref{A-65B1fx2hx2Copy1} and~\eqref{A-65B1fx2hx2CopyReducedSystem}.
Thus, the subalgebra~$\langle P'\rangle$ leads to the simplest ansatz $\varphi^4=C$
and the algebraic reduced equation $C=0$. The subalgebra~$\langle P'+\Pi'\rangle$ gives
the ansatz $\varphi^4=(6t+C)/(t^2+1)$ and incompatible over~$\mathbb{R}$ algebraic equation $(4C^2+81)(C^2+36)=0$.
The most interesting reduction of such type is connected with the subalgebra~$\langle D'\rangle$.
The $\langle D'\rangle$-invariant ansatz $\varphi^4=C/t$ reduces~\eqref{A-65B1fx2hx2EqForPhi4}
to the algebraic equation $16C^4-192C^3+639C^2-378C=0$ having non-trivial set of solutions $\{0, 3/4, 21/4, 6\}$.
The corresponding invariant solutions of equation~\eqref{A-65B1fx2hx2Copy1} are adduced above.

It is obvious that equation~\eqref{A-65B1fx2hx2Copy} has the nonclassical symmetry operator $\p_t+\p_x$
which gives the ansatz $v=\varphi(\omega)$, $\omega=t+x$ and reduces \eqref{A-65B1fx2hx2Copy} to
the equation $\varphi\varphi_{\omega\omega}-\frac56\varphi_{\omega}^2=0$.
The corresponding solution $v=C_1(x+t+C_0)^6$, where $C_0$ and $C_1$ are arbitrary constants,
is Lie invariant and equivalent to one adduced above.
Let us note that the same operator $\p_t+\p_x$ is a nonclassical symmetry of any equation~\eqref{eqDKfgh},
where $f=h$ and $g=1$.
Detailed discussion of nonclassical (reduction) operators for equations~\eqref{eqDKfgh} is given in the next section.

\section{On nonclassical symmetries}\label{SectionNonclasSym}

Reduction operators (nonclassical symmetries, $Q$-conditional symmetries)
of equations~\eqref{eqDKfgh} have the general form $Q = \tau\p_t + \xi\p_x + \eta\p_u$,
where~$\tau$, $\xi$ and~$\eta$ are functions of $t$, $x$ and $u$, and $(\tau, \xi) \ne (0,0)$.

\begin{definition}\label{DefinitionOfCondSym}
The differential equation~$\mathcal{L}$ is called
\emph{conditionally invariant} with respect to the operator $Q$ if
the relation
$Q_{(r)}L(t,x,u_{(r)})\bigl|_{\mathcal{L}\cap\mathcal{Q}^{(r)}}=0$
holds, which is called the \emph{conditional invariance
criterion}. Then $Q$ is called an operator of \emph{conditional
symmetry} (or $Q$-conditional symmetry, nonclassical symmetry etc)
of the equation~$\mathcal{L}$.
\end{definition}

See, e.g.,~\cite{Zhdanov&Tsyfra&Popovych1999,Popovych&Vaneeva&Ivanova2005} for necessary definitions and properties of nonclassical symmetries.
Since~\eqref{eqDKfgh} is an evolution equation, there are two principally different cases of finding $Q$: $\tau\ne0$ and $\tau=0$.

The problem of classification of nonclassical symmetries with $\tau\ne0$ is not completely solved even in the case of constant coefficient
diffusion--convection equations. However, some particular results are known.
Thus, e.g., constant coefficient diffusion equation with power nonlinearity
\[
u_t=(u^\mu u_x)_x
\]
admits non-classical symmetries only for $\mu=-1/2$~\cite{Gandarias2001}.
Ibid a particular case of such operators $Q=\p_t+12x^{-2}u^{1/2}$ is adduced and the corresponding exact solution
$u=(6tx^{-1}+C_1x^3+C_2x^{-2})^2$ is constructed. Here~$C_1$ and~$C_2$ are arbitrary constants.
The same solution were found previously in~\cite{King1992}.
Conditional symmetry operator $Q=\partial_t-\lambda u^{\mu+1}\partial_x$ for the equation $u_t=(u^\mu u_x)_x+\lambda u^{\mu+1}u_x$
is found in the recent work~\cite{Cherniha&Plyuhin2006}.
Constant coefficient diffusion equation with exponential nonlinearity $u_t=(e^uu_x)_x$ admits two inequivalent
reduction operators: $Q=x\p_t+e^u\p_u$ and $Q=x^2\p_t+2xe^u\p_x+2e^u\p_u$~\cite{Fushchych&Serov&Amerov1992,Fushchych&Serov&Tulupova1993}.

We derive the system of determining equations for the reduction operators with $\tau\ne0$ of equations~\eqref{eqDKfgh}:
\begin{gather}
\xi_{uu}A-\xi_uA_u=0  \label{SysDetEqForCondSymtau1}\\
-\eta_{uu}A^2-\eta_uAA_u+2\xi_{xu}A^2-2\xi\xi_ufA-2\xi_uhAB-\eta AA_{uu}+\eta (A_u)^2=0 \label{SysDetEqForCondSymtau2}\\
-2\eta_{xu}A^2-2\eta_xAA_u-\xi_tfA+2\eta\xi_ufA+\xi_{xx}A^2-2\xi\xi_xfA-\xi_xhAB \nonumber \\
\eta\xi fA_u+\eta hA_uB-\xi^2f_xA-\xi h_xAB-\eta hAB_u=0 \label{SysDetEqForCondSymtau3}\\
\eta_t fA-A^2\eta_{xx}-\eta_xhAB+2\xi_x\eta fA-\eta^2fA_u +\xi\eta f_x A=0\label{SysDetEqForCondSymtau4}
\end{gather}
These equations can be partially integrated.
In particular,~\eqref{SysDetEqForCondSymtau1} implies that
\[\textstyle
\xi = \phi (x,t) \int A +\psi(x,t).
\]
Then,~\eqref{SysDetEqForCondSymtau2} can be written in the form
\[\textstyle
\left( \dfrac {(\eta A)_u}{A} \right)_u=2 \left( \phi_xA-\phi^2f\int A -\phi\psi f -\phi h B\right).
\]

In the case $A=u^n$ and $B=u^m$ ($n \ne -1,-2,-\frac 32$, $m \ne -1,-(n+2)$) the latter two formulas give
\begin{gather*}
\xi =\phi(x,t)u^{n+1}+\psi (x,t),
\\
\eta= \phi_2(x,t) u^{-n} +\psi_2(x,t) u +\frac 1{n+1}\phi_xu^{n+2} -\frac{2(n+1)}{(m+1)(m+n+2)}\phi h u^{m+2}\\
-\frac{2(n+1)}{(n+2)(2n+3)}\phi^2f u^{n+3}-\frac{2(n+1)}{n+2}\phi\psi fu^2.
\end{gather*}
Substituting these to~\eqref{SysDetEqForCondSymtau3} and~\eqref{SysDetEqForCondSymtau4} and solving the derived system we get an example
of nonclassical symmetry for equations~\eqref{eqDKfgh}.
\begin{example}
$A=B=u^{-1/2}$, $\forall f(x),h(x)$, $g(x)=1$:\quad $Q=\p_t+\phi\sqrt{u} \p_x$, where $\phi=\phi(x)$ is an arbitrary solution of equation
\[
\phi''+h(x) \phi' -1/2 f(x) \phi^2=0.
\]
\end{example}


Similarly we found an example of nonclassical symmetry in the case of exponential nonlinearities.
\begin{example}
$A=B=e^u$, $\p_t+\phi(x)e^u\p_x$, where $\phi'-f\phi^2-h\phi =0$.
\end{example}

As well-known, the operators with the vanishing coefficient of $\p_t$ form so-called ``no-go'' case
in study of conditional symmetries of an arbitrary (1 + 1)-dimensional evolution equation since
the problem on their finding is reduced to a single equation which is equivalent to the initial
one (see e.g\cite{Fushchych&Shtelen&Serov&Popovych1992,Popovych1998,Zhdanov&Lahno1998}).
Note that ``no-go'' case has to be treated as impossibility only of exhaustive solving of the problem.
A number of particular examples of reduction operators with $\tau=0$ can be constructed under additional constraints
and then applied to finding exact solutions of the initial equation.
Since the determining equation has more independent variables and, therefore,
more degrees of freedom, it is more convenient often to guess a simple solution or a simple ansatz
for the determining equation, which can give a parametric set of complicated solutions of the
initial equation. Namely, in the case $\tau=0$ we have $\xi\ne0$. Up to usual equivalence of reduction
operators, $\xi$ can be assumed equal to 1, i.e. $Q =\p_x +\eta\p_u$. The conditional invariance criterion
implies the determining equation on the coefficient $\eta$
\begin{gather*}
\eta_tf^2-2\eta\eta_{ux}f A-\eta^2\eta_{uu}f A-2\eta^2\eta_ufA_u+\eta\eta_uf_xA-\eta_{xx}fA-3\eta\eta_xfA_u+\eta_xf_xA \\
-\eta_xfhB-\eta^3fA_{uu}+\eta^2f_xA_u+\eta f_xhB-\eta h_xf B-\eta^2fhB_u=0
\end{gather*}
which is reduced with a non-point transformation to equation~\eqref{eqDKfgh}, where $\eta$ becomes a parameter.
We have found some partial solutions of the determining equations.
\begin{example}
$A=B$,  $\forall f$, $\forall h$, $\eta =\phi(x)A^{-1}$, where $\phi'+h\phi =0$.
\end{example}
\begin{example}
$A=u^n$, $B=u^{2n}$, $f=x^{-2}$, $h=x^{-\frac{2(n+1)}{n+2}}$,  $\eta =\frac 1{(n+2)^2}x^{-\frac{2}{n+2}}u^{1-n}$.
\end{example}

\section{Contractions of solutions}\label{SectionContrSolutions}

Consider, as in Section~\ref{SectionDKfghContractions}, the class~$\{\mathcal{L}^\lambda\}$ of totally nondegenerate systems~$\mathcal{L}^\lambda$:
$L(x,u_{(\prho)},\lambda)=0$
of $l$~differential equations for $m$~unknown functions $u=(u^1,\ldots,u^m)$
of $n$~independent variables $x=(x_1,\ldots,x_n)$,
which are parameterized with the (single numeric) parameter~$\lambda$.

Let for any $\lambda$ from a deleted neighborhood of $\lambda_0$ the system~$\mathcal{L}^\lambda$ be Lie invariant
with respect to  a symmetry algebra~$A^\lambda$,
$\mathcal{L}^\lambda\rightharpoonup\mathcal{L}^{\lambda_0}$ and
$A^\lambda\to A^{\lambda_0}$, $\lambda\to\lambda_0$.
Then, according to the results of Section~\ref{SectionDKfghContractions}, the system~$\mathcal{L}^{\lambda_0}$
is Lie invariant with respect to the algebra~$A^{\lambda_0}$.

If $\omega^\lambda$ is a Lie ansatz corresponding to the algebra~$A^\lambda$,
then, using the theorems on convergence of solutions of differential equations with respect to initial data and parameters
(see, e.g.,~\cite{Coddington&Levinson1955})
one can prove that there exists such $\omega^{\lambda_0}$ that $\omega^\lambda\to\omega^{\lambda_0}$, $\lambda\to\lambda_0$ and
$\omega^{\lambda_0}$ is an Ansatz corresponding to the algebra~$A^{\lambda_0}$.
 Consequently the systems reduced with respect to~$\omega^{\lambda}$ converge to one reduced with~$\omega^{\lambda_0}$.
 In such sense we can talk about {\em contractions of ansatzes and reduced systems}.

Consider an example of contractions of ansatzes, reduced equations and solutions of equations~\eqref{eqDKfgh}.
As it is shown above, under the action of contraction $u=1+\frac{\tilde u}\mu$, $\mu\to+\infty$ diffusion equation with power nonlinearity
\begin{equation}\label{eqCCDEPowerNonl}
u_t=(u^\mu u_x)_x
\end{equation}
(case~3.\ref{AumB0f1h}) invariant with respect to
$A^\mu=\langle Q^\mu_1=\p_t,Q^\mu_2=t\p_t-\mu^{-1}u\p_u,Q^\mu_3=\p_x,Q^\mu_4=x\p_x+2\mu^{-1}u\p_u \rangle$ is reduced to the equation
\begin{equation}\label{eqCCDEExpNonl}
\tilde u_t=(e^{\tilde u}\tilde u_x)_x
\end{equation}
(case~2.\ref{AeuB0f1h}) with exponential nonlinearity being invariant with respect to the Lie symmetry algebra
$A^{\exp}=\langle Q_1=\p_t,Q_2=t\p_t-\p_u,Q_3=\p_x,Q_4=x\p_x+2\p_u \rangle$.

The basis elements of the Lie invariance algebra $A^{\exp}$ can be obtained under the same contraction as $Q^\mu_i\to Q_i$.

All possible inequivalent (with respect to inner automorphisms)
one-dimensional subalgebras of the maximal Lie invariance algebras~$A^\mu$ are exhausted by the ones
listed in Table~5 together with the corresponding ansatzes and the reduced ODEs.

\begin{center}
\renewcommand{\arraystretch}{1.2}
Table~5. Reduced ODEs for~\eqref{eqCCDEPowerNonl}. $\mu\ne0,-1,$ $\alpha\ne0,$ $\varepsilon=\pm1,$ $\delta=\sign t.$
\footnotesize
\begin{tabular}{|l|l|c|c|l|}
\hline \vspacebefore
N&Subalgebra& Ansatz $u=$& $\omega$ &\hfill {Reduced ODE\hfill} \\
\hline
1&$\langle Q^\mu_3\rangle$ & $\varphi(\omega)$ & $t$ & $\varphi'=0$\\
2&$\langle Q^\mu_4\rangle$ & $\varphi(\omega)|x|^{2/\mu}$ & $t$ &
   $\varphi'=2\mu^{-2}(2+\mu)\varphi^{\mu+1}$\\
3&$\langle Q^\mu_1\rangle$ & $\varphi(\omega)$ & $x$ & $(\varphi^{\mu}\varphi')'=0$\\
4&$\langle Q^\mu_2\rangle$ & $\varphi(\omega)|t|^{-1/\mu}$ & $x$
  & $(\varphi^{\mu}\varphi')'=-\delta\mu^{-1}\varphi$\\
5&$\langle Q^\mu_1+\varepsilon Q^\mu_3\rangle$ & $\varphi(\omega)$ & $x-\varepsilon t$ &
     $(\varphi^{\mu}\varphi')'=-\varepsilon\varphi'$\\
6&$\langle Q^\mu_2+\varepsilon Q^\mu_3\rangle$ & $\varphi(\omega)|t|^{-1/\mu}$ & $ x-\varepsilon\ln|t|$
  & $(\varphi^{\mu}\varphi')'=-\delta\varepsilon\varphi'-\delta\mu^{-1}\varphi$\\
7&$\langle Q^\mu_1+\varepsilon Q^\mu_4\rangle$ & $\varphi(\omega)e^{2\varepsilon\mu^{-1} t}$
       & $xe^{-\varepsilon t}$
  &$(\varphi^{\mu}\varphi')'=-\varepsilon\omega\varphi'+2\mu^{-1}\varepsilon\varphi$\\
8&$\langle Q^\mu_2+\alpha Q^\mu_4\rangle$ & $\varphi(\omega)|t|^{(2\alpha-1)/\mu}$ & $ x|t|^{-\alpha}$ &
   $(\varphi^{\mu}\varphi')'=\delta\mu^{-1}(2\alpha-1)\varphi-\delta\alpha\omega\varphi'$\\
\hline
\end{tabular}
\end{center}

For the considered equations the optimal systems of subalgebras of~$A^\mu$ are contracted to an optimal system of subalgebras of~$A^{\exp}$.
Let us note, that in general the question whether optimal systems of subalgebras of the maximal Lie invariance algebras of systems $\mathcal{L}^\lambda$
converge to the optimal subalgebras system of~$\mathcal{L}^{\lambda_0}$, remains open.

Contractions of ansatzes and reduced ODEs in cases~5.1 and~5.3 are obvious. Let us consider in more details contraction of case~5.4.
Under the transformation of variables $u=1+\frac{\tilde u}\mu$ the function $\varphi$ should be changed as $\varphi=1+\frac{\tilde \varphi}\mu$.
The ansatz $\tilde u=\varphi(x)|t|^{-1/\mu}$ can be contracted as follows:
\[
\Big(1+\frac{\tilde u}\mu\Big)^\mu=\Big(1+\frac{\tilde \varphi}\mu\Big)^\mu t^{-1} \to e^{\tilde u}=e^{\tilde \varphi}|t|^{-1}, \quad \mu\to\infty,
\]
Therefore, $\tilde u=\tilde \varphi-\ln|t|$. The similarity variable $\tilde\omega=x$ is not changed under the contraction.
Substituting the derived expressions to the reduced equation~4.4 we obtain
\[
\frac\mu{\mu+1}\left[\left(1+\frac{\tilde \varphi}\mu\right)^{\mu+1}\right]''=-\delta\left(1+\frac{\tilde \varphi}\mu\right).
\]
If now $\mu\to\infty$ we get the reduced ordinary differential equation
\[(e^{\tilde\varphi})''=-\delta
\]
for the target equation~\eqref{eqCCDEExpNonl}.

Similarly one can contract all ansatzes and reduced equations of~\eqref{eqCCDEPowerNonl} to ones for equations with exponential nonlinearity.
The results of these contractions are summarized in Table~6.

\begin{center}
\renewcommand{\arraystretch}{1.2}
Table~6. Reduced ODEs for~\eqref{eqCCDEExpNonl}. $\alpha\ne0,\ \varepsilon=\pm1,\ \delta=\sign t$.
\footnotesize
\begin{tabular}{|l|l|c|c|l|}
\hline \vspacebefore
N&Subalgebra& Ansatz $\tilde u=$& $\omega$ &\hfill {Reduced ODE\hfill} \\
\hline
1&$\langle Q_3\rangle$ & $\varphi(\omega)$ & $t$ & $\varphi'=0$\\
2&$\langle Q_4\rangle$ & $\varphi(\omega)+2\ln|x|$ & $t$ & $\varphi'=2e^{\varphi}$\\
3&$\langle Q_1\rangle$ & $\varphi(\omega)$ & $x$ & $(e^{\varphi})''=0$\\
4&$\langle Q_2\rangle$ & $\varphi(\omega)-\ln|t|$ & $x$ & $(e^{\varphi})''=-\delta$\\
5&$\langle Q_1+\varepsilon Q_3\rangle$ & $\varphi(\omega)$ & $x-\varepsilon t$ &
$(e^{\varphi})''=-\varepsilon\varphi'$\\
6&$\langle Q_2+\varepsilon Q_3\rangle$ & $\varphi(\omega)-\ln|t|$ & $x-\varepsilon\ln|t|$
& $(e^{\varphi})''=-\delta(\varepsilon\varphi'+1)$\\
7&$\langle Q_1+\varepsilon Q_4\rangle$ & $\varphi(\omega)+2\varepsilon t$ & $xe^{-\varepsilon t}$ &
$(e^{\varphi})''=-\varepsilon\omega\varphi'+2\varepsilon$\\
8&$\langle Q_2+\alpha Q_4\rangle$ & $\varphi(\omega)+(2\alpha-1)\ln|t|$ & $x|t|^{-\alpha}$ &
     $(e^{\varphi})''=\delta(-\alpha\omega\varphi'+2\alpha-1)$\\
\hline
\end{tabular}
\end{center}

\section{Conclusion}\label{SectionConclusion}

In this second part of the presented series of papers
(see also~\cite{Ivanova&Popovych&Sophocleous2006Part1,Ivanova&Popovych&Sophocleous2006Part3,Ivanova&Popovych&Sophocleous2006Part4})
we investigate in more detail symmetry properties of class~\eqref{eqDKfgh}.
Namely, considering non-trivial limits of parameterized subclasses of equations
from class~(\ref{eqDKfgh}), which generate contractions of the corresponding maximal Lie invariance algebras,
we introduce the notion of contraction of (systems of) differential equations and consider contractions of
equations and from class~(\ref{eqDKfgh}) and ones of their symmetries.
We also investigate $sl(2,\mathbb{R})$-invariant equation~\eqref{A-65B1fx2hx2Copy}
which is ``essentially variable coefficient'' in the sense that it is not reducible to equations of
form~\eqref{eqDKfgh} with constant values of $f$, $g$ and $h$.

Using similar techniques, we can study
other classes of non-linear evolution equations which are closed to the class under consideration,
e.g. the class of variable coefficient reaction--diffusion equations of the general form
$f(x)u_t=(g(x)A(u)u_x)_x+h(x)B(u)$,
where all denotations coincide with ones in~\eqref{eqDKfgh}.
(See the resent work~\cite{Vaneeva&Johnpillai&Popovych&Sophocleous2006} for its particular case with power nonlinearity.)
However, experience of modern group analysis shows
that extension of circle of problems leads to necessity of modification and enhancement of applied tools.

Another natural direction is generalization of the ``contraction concept" to conservation laws of (systems of) diffusion equations
that is considered in the next part~\cite{Ivanova&Popovych&Sophocleous2006Part3} of this series.

\subsection*{Acknowledgements}

NMI and ROP express their gratitude to the hospitality shown by University of Cyprus
during their visits to the University.
Research of NMI was supported by the Erwin Schr\"odinger Institute for Mathematical Physics (Vienna, Austria) in form of Junior Fellowship
and by the grant of the President of Ukraine for young scientists (project number GP/F11/0061).
Research of ROP was supported by Austrian Science Fund (FWF), Lise Meitner project M923-N13.

\end{document}